\documentclass[12pt,a4paper]{article}
\usepackage{a4wide}
\usepackage{showlabels}
\usepackage[utf8]{inputenc}
\usepackage{enumerate}
\usepackage{amsmath}
\usepackage{amsfonts}
\usepackage{nicefrac}
\usepackage{amssymb}
\usepackage{amsthm}
\usepackage{mathtools}
\usepackage{tikz}
\usepackage{xcolor}
\usepackage{authblk}
\usepackage{tabularx}
\usepackage{paralist}
\usepackage{booktabs}
\usepackage{paralist}
\usepackage{hyperref}
\usepackage[square,numbers]{natbib}
\usepackage[capitalize]{cleveref}
\usepackage[backgroundcolor=gray!10,textsize=footnotesize]{todonotes}
\usepackage{etoolbox}
\usepackage{xspace}
\usepackage{algpseudocode}
\usepackage{algorithm}

\usetikzlibrary{decorations.pathreplacing}
\usetikzlibrary{math}
\usetikzlibrary{calc}
\usetikzlibrary{shapes}

\newtheorem{theorem}{Theorem}
\newtheorem{lemma}[theorem]{Lemma}
\newtheorem{observation}[theorem]{Observation}
\newtheorem{corollary}[theorem]{Corollary}

\theoremstyle{definition}

\newcommand{\problemdef}[3]{
	\begin{center}
	\begin{minipage}{0.95\columnwidth}
		\noindent
		\textsc{#1}
		\vspace{5pt}\\
		\setlength{\tabcolsep}{3pt}
		\begin{tabularx}{\textwidth}{@{}lX@{}}
			\textbf{Input:}     & #2 \\
			\textbf{Question:}  & #3
		\end{tabularx}
	\end{minipage}
	\end{center}
}

\newcommand{\MTsum}{\textsc{MinTimeline}$_+$\xspace}
\newcommand{\MTmax}{\textsc{MinTimeline}$_\infty$\xspace}

\newcommand{\XP}{\textrm{XP}\xspace}
\newcommand{\Wone}{\textrm{W[1]}\xspace}
\newcommand{\FPT}{\textrm{FPT}\xspace}
\newcommand{\NP}{\textrm{NP}\xspace}

\newcommand{\G}{\mathcal{G}}
\newcommand{\TGcompact}{\G := (V,(E_i)_{i=1}^\tau)}
\newcommand{\vc}{\mathcal{T}}

\newcommand{\vcs}{\mathcal{C}}

\DeclareMathOperator{\inc}{inc}

\Crefname{theorem}{Theorem}{Theorems}
\crefname{theorem}{Thm.}{Thms.}
\Crefname{corollary}{Corollary}{Corollaries}
\crefname{corollary}{Cor.}{Cors.}
\newcommand{\tcite}[1]{\textsuperscript{\tiny[\cref{#1}]}}

\title{Disentangling the Computational Complexity of Network Untangling\thanks{This work was initiated at the research retreat of the Algorithmics and Computational Complexity group, TU Berlin, held at
Zinnowitz, September 2021.}}

\author{Vincent Froese}
\author{Pascal Kunz\thanks{Supported by the DFG Research Training Group 2434 ``Facets of Complexity''.}}
\author{Philipp Zschoche}

\affil{\small
  Technische Universit\"at Berlin, Faculty~IV, Institute of Software Engineering and Theoretical Computer Science, Algorithmics and Computational Complexity.\protect\\
  \texttt{\{vincent.froese,p.kunz.1,zschoche\}@tu-berlin.de}}

\date{}

\begin{document}

\maketitle

\begin{abstract}
  We study the network untangling problem introduced by Rozenshtein, Tatti, and Gionis~[DMKD 2021], which is a variant of~\textsc{Vertex Cover} on temporal graphs---graphs whose edge set changes over discrete time steps. They introduce two problem variants. The goal is to select at most~$k$ time intervals for each vertex such that all time-edges are covered and (depending on the problem variant) either the maximum interval length or the total sum of interval lengths is minimized. This problem has data mining applications in finding activity timelines that explain the interactions of entities in complex networks.

  Both variants of the problem are \NP-hard. In this paper, we initiate a multivariate complexity analysis involving the following parameters: number of vertices, lifetime of the temporal graph, number of intervals per vertex, and the interval length bound. For both problem versions, we (almost) completely settle the parameterized complexity for all combinations of those four parameters, thereby delineating the border of fixed-parameter tractability.
\end{abstract}

\section{Introduction}
The classical \textsc{Vertex Cover} problem is among the most well-studied \NP-hard problems on static graphs especially in the context of parameterized algorithmics. 
In fact, it is often referred to as the \emph{Drosophila} of parameterized complexity 
\cite{DF13,%
FellowsJKRW18}.
Real-world graphs, however, are often dynamic and change over time.
In modern applications, this temporal information is readily available and allows for more realistic models.
This leads algorithmic research to focus on temporal graphs 
\cite{BoehmerFHLNR21,%
iwocaBumpusM21,%
DeligkasP20,%
jcssErlebach0K21,%
FanJHYWWMX21,%
HS12,%
tcsMichailS16,%
SingerGR19}.
In a temporal graph, edges can appear and disappear over time, which is formalized by a sequence~$E_1,\ldots,E_\tau$ of edge sets over a fixed set~$V$ of vertices, where $\tau\ge 1$ denotes the lifetime of the temporal graph.
Several classical graph problems have been studied on temporal graphs \cite{AMSZ20,algorithmicaCasteigtsHMZ21,ijcaiKlobasMMNZ21,%
stacsMertziosMNZZ20,jcssMertziosMZ21,%
jcssZschocheFMN20}.
  
In this paper, we study the network untangling problem introduced by Rozenshtein et al.~\cite{RTG17,RTG21}, which is a temporal variant of \textsc{Vertex Cover} and motivated by data mining applications such as discovering event timelines and summarizing temporal networks.
Here, edges in the temporal graph model certain interactions between entities (vertices).
The goal is to explain the observed interactions by selecting few (and short) activity intervals for each entity such that at the time of an interaction at least one of the two entities is active.
The formal definition is as follows:
Let $\G=(V,(E_i)_{i\in[\tau]})$ be a temporal graph.
A~\emph{$k$-activity timeline} is a set~$\vc$ containing at most~$k$ time intervals for each vertex, that is, $\vc\subseteq \{(v,a,b)\in V\times [\tau]\times[\tau]\mid a\le b\}$ such that $|\{(v,a,b)\in\vc\}|\le k$ for each~$v\in V$.
We say that~$\vc$ \emph{covers}~$\G$ if for each~$t\in[\tau]$ and each~$\{u,v\}\in E_t$, $\vc$ contains some~$(u,a,b)$ with~$t\in[a,b]$ or some $(v,a,b)$ with~$t\in[a,b]$.
That is, for each time step~$t$, the set~$\{v\in V\mid (v,a,b)\in\vc, t\in [a,b]\}$ must be a vertex cover for the graph~$G_t=(V,E_t)$.
The task is to find a~$k$-activity timeline that minimizes some objective regarding the interval lengths.
\citet{RTG21} introduced the following two problems.

\problemdef{\MTmax}
{A temporal graph $\G=(V,E_1,\ldots,E_\tau)$ and $k,\ell\in\mathbb{N}_0.$}
{Is there a~$k$-activity timeline~$\vc$ covering~$\G$ with $\max_{(v,a,b)\in\vc}(b-a)\le \ell$?}

\problemdef{\MTsum}
{A temporal graph $\G=(V,E_1,\ldots,E_\tau)$ and $k,\ell\in\mathbb{N}_0.$}
{Is there a~$k$-activity timeline~$\vc$ covering~$\G$ with $\sum_{(v,a,b)\in\vc}(b-a)\le \ell$?}
\noindent
An example instance and solutions for each of these two problems are pictured in \Cref{fig:intro-ex}.
Note that both problems are equivalent if~$\ell=0$.

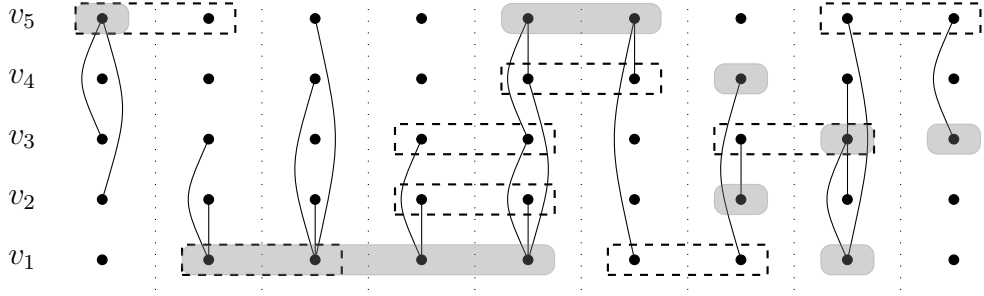
\begin{figure}[t]
	\centering
	\begin{tikzpicture}[xscale=1.4,yscale=0.8]
	\def\nsc{0.33}
	\tikzstyle{xnode}=[circle,scale=\nsc,draw,fill=black];
	\tikzstyle{max}=[dashed, thick];
	\tikzstyle{sum}=[rounded corners,gray,opacity=0.35,fill];

	\node () at (0.25,1) {$v_1$};
	\node () at (0.25,2) {$v_2$};
	\node () at (0.25,3) {$v_3$};
	\node () at (0.25,4) {$v_4$};
	\node () at (0.25,5) {$v_5$};
	
	\foreach \x in {1,...,8} {
		\draw[loosely dotted] (\x+0.5,0.5) -- (\x+0.5,5.5);
	}
	
	\foreach \x in {1,...,9} {
		\foreach \y in {1,...,5} {
			\node (\x!v\y) at (\x,\y)[xnode]{};
		}
	}
	\draw (1!v2) .. controls (1.25,3.5) .. (1!v5);
	\draw (1!v3) .. controls (0.75,4) .. (1!v5);
	\draw (2!v1) -- (2!v2);
	\draw (2!v1) .. controls (1.75,2) .. (2!v3);
	\draw (3!v1) -- (3!v2);
	\draw (3!v1) .. controls (2.75,2.5) .. (3!v4);
	\draw (3!v1) .. controls (3.25,3) .. (3!v5);
	\draw (4!v1) -- (4!v2);
	\draw (4!v1) .. controls (3.75,2) .. (4!v3);
	\draw (5!v1) -- (5!v2);
	\draw (5!v1) .. controls (4.75,2) .. (5!v3);
	\draw (5!v1) .. controls (5.25,2.5) .. (5!v4);
	\draw (5!v4) -- (5!v5);
	\draw (5!v3) .. controls (4.75,4) .. (5!v5);
	\draw (6!v1) .. controls (5.75,3) .. (6!v5);
	\draw (6!v4) -- (6!v5);
	\draw (7!v2) -- (7!v3);
	\draw (7!v1) .. controls (6.75,2.5) .. (7!v4);
	\draw (8!v1) .. controls (7.75,2) .. (8!v3);
	\draw (8!v1) .. controls (8.25,3) .. (8!v5);
	\draw (8!v2) -- (8!v3);
	\draw (8!v3) -- (8!v4);
	\draw (9!v3) .. controls (8.75,4) .. (9!v5);

	\draw[max] (0.75, 4.75) rectangle (2.25, 5.25) {};
	\draw[max] (1.75, 0.75) rectangle (3.25, 1.25) {};
	\draw[max] (3.75, 1.75) rectangle (5.25, 2.25) {};
	\draw[max] (3.75, 2.75) rectangle (5.25, 3.25) {};
	\draw[max] (4.75, 3.75) rectangle (6.25, 4.25) {};
	\draw[max] (5.75, 0.75) rectangle (7.25, 1.25) {};
	\draw[max] (6.75, 2.75) rectangle (8.25, 3.25) {};
	\draw[max] (7.75, 4.75) rectangle (9.25, 5.25) {};

	\draw[sum] (0.75, 4.75) rectangle (1.25, 5.25) {};
	\draw[sum] (1.75, 0.75) rectangle (5.25, 1.25) {};
	\draw[sum] (4.75, 4.75) rectangle (6.25, 5.25) {};
	\draw[sum] (6.75, 1.75) rectangle (7.25, 2.25) {};
	\draw[sum] (6.75, 3.75) rectangle (7.25, 4.25) {};
	\draw[sum] (7.75, 0.75) rectangle (8.25, 1.25) {};
	\draw[sum] (7.75, 2.75) rectangle (8.25, 3.25) {};
	\draw[sum] (8.75, 2.75) rectangle (9.25, 3.25) {};
	\end{tikzpicture}
	\caption{
		An example of a temporal graph:
		The areas between the dotted lines represent the individual layers.
		The intervals highlighted in gray form a solution for \MTsum (with $\ell=4$, $k=2$).
		The intervals indicated by dashed boxes form a solution for \MTmax (with $\ell=1$, $k=2$).
	}
	\label{fig:intro-ex}
\end{figure}

\citet{RTG21} showed that both problems are \NP-hard and hard to approximate in polynomial time.
They focus on developing heuristics and show in a case study that activity timelines can successfully be recovered for real-world data of Twitter users.
In this work, we provide a deeper complexity-theoretic understanding of both problems
through the lens of parameterized complexity theory.
To this end, we study the influence of several natural problem parameters on the algorithmic complexity, namely, the number~$n$ of vertices, the lifetime~$\tau$, the maximum number~$k$ of intervals per vertex, and the interval length bound~$\ell$.
In our multivariate analysis, we identify for both problems which parameter combinations yield fixed-parameter tractability and which combinations are intractable.
In doing so, we reveal interesting connections to seemingly unrelated problems like graph coloring and bin packing.
Our results yield an (almost) tight characterization and pave the way for a comprehensive picture of the computational complexity landscape of network untangling.

\paragraph{Related Work.}
As the literature on edge covering is extremely rich,
we only consider studies closely related to our setting. 
For a broader overview on the topic of temporal network mining, we refer to a recent tutorial~\cite{RG19}.
Our main reference is the work by~\citet{RTG21} who introduced both problems and showed that \MTmax is polynomial-time solvable for~$k=1$, whereas \MTsum is \NP-hard for~$k=1$.
Moreover, they showed that both problems are \NP-hard for~$\ell=0$ with unbounded~$k$ and thus not approximable in polynomial time.
They develop efficient heuristics to solve the problems and provide experiments to evaluate the performance of their approaches.

\citet{AMSZ20} studied a different variant of \textsc{Vertex Cover} on temporal graphs.
Their model expects an edge to be covered at least once over every time window of some given size~$\Delta$.
That is, 
they define a temporal vertex cover as a set~$S\subseteq V\times [\tau]$ such that,
for every time window of size~$\Delta$ and for each edge~$e=\{v,w\}$ appearing in a time step contained in the time window,
it holds that~$(v,t)\in S$ or~$(w,t)\in S$ for some~$t$ in the time window with time-edge~$(e,t)$.
Among other results, they provide \NP-hardness results in very restricted settings. 

\citet{FNRZ19} and \citet{tcsHeegerHKNRS21} 
studied the parameterized complexity of a ``multistage'' variant of \textsc{Vertex Cover}.
Here, one is given a temporal graph
and seeks a sequence
of vertex covers of size at most~$k$ 
(one for each layer of the temporal graph) 
such that consecutive vertex covers are in some sense similar,
e.g., the symmetric difference of two consecutive vertex covers is upper-bounded by some value~\cite{FNRZ19} or
the sum of symmetric differences of all consecutive vertex covers is upper-bounded by some value~\cite{tcsHeegerHKNRS21}.

\paragraph{Our Contributions.}

We almost completely settle the parameterized complexity of the two variants \MTmax and \MTsum for all combinations of the parameters~$n$, $\tau$, $k$, and~$\ell$.
\Cref{tab:results} gives an overview of the central results.
It turns out that both problems are already \NP-hard if the temporal graph contains three identical layers and $k=2$ and~$\ell=0$~(\Cref{cor:nphard-tau-k-l}). This strengthens the hardness result by~\citet{RTG21} for~$\ell=0$
and implies that fixed-parameter tractability can only be achieved in combination with the parameter~$n$.
Interestingly, the two problems behave somewhat differently here.
Our two main results state that \MTmax parameterized by~$n$ is \Wone-hard even for~$\ell=1$~(\Cref{thm:w1-max-n}), while \MTsum parameterized by~$n+\ell$ is in \FPT~(\Cref{thm:sum-fpt-n+l}).
We further show that both problems are in \XP when parameterized by~$n$ (\Cref{thm:XP-n-max,thm:FPT-n+k-sum}) and in \FPT if~$\ell=0$~(\Cref{thm:ell=0FPTn}).
\Cref{thm:FPT-n+k-sum} shows that \MTsum is also in \FPT for~$n+k$ and
\Cref{thm:fpt-n+k-max} shows the same for \MTmax.
As regards the case~$k=1$, we strengthen the \NP-hardness of \MTsum showing that it already holds for two layers~(\Cref{thm:tau=2}).
Note that~$n+\tau$ trivially yields fixed-parameter tractability
since the instance size is bounded in these numbers.
Thus, for \MTmax, we obtain a full complexity dichotomy.
For \MTsum, the parameterized complexity regarding~$n$ remains open.
Notably, our hardness results also imply running time lower bounds based on the Exponential Time Hypothesis\footnote{The Exponential Time Hypothesis~\cite{IP01} states that \textsc{3-SAT}
cannot be solved in~$O(2^{cn})$ time for some constant~$c>0$, where~$n$ is the number of variables of the input formula.} (ETH)~(\Cref{cor:ETHcoloring,cor:ETHbin}).

\begin{table*}[t]
	\centering
	\caption{Overview of results. Parameters: $n$ number of vertices, $\tau$ lifetime, $k$ number of intervals per vertex, $\ell$ interval length bound.}
	\begin{tabular}{lrr}
          \toprule
          Parameter & \MTmax & \MTsum \\
          \midrule
          $n+k$ & \FPT{}\tcite{thm:fpt-n+k-max} & \FPT{}\tcite{thm:FPT-n+k-sum}\\
          $n+\ell$ & \Wone-h.\tcite{thm:w1-max-n}, \XP\tcite{thm:XP-n-max} & \FPT{}\tcite{thm:sum-fpt-n+l}\\
          $n$ & \Wone-h.\tcite{thm:w1-max-n} ($\ell=1$), \XP{}\tcite{thm:XP-n-max} & \XP{}\tcite{thm:FPT-n+k-sum} \\
          $\tau=3$, $k=2$, $\ell=0$ & \NP-h.\tcite{cor:nphard-tau-k-l} & \NP-h.\tcite{cor:nphard-tau-k-l}\\
          \bottomrule
	\end{tabular}%
	\label{tab:results}
\end{table*}

\section{Preliminaries}
We denote the positive integers by~$\mathbb N$ and let $\mathbb{N}_0\coloneqq \mathbb{N} \cup \{0\}$.
For~$n\in\mathbb{N}$, let~$[n]\coloneqq \{1,\ldots,n\}$.

\paragraph*{Graphs.}
If not specified otherwise, a graph is static and undirected.
We refer to \citet{Die16} for details on notation in graph theory.
If $G=(V,E)$ is a graph and $S\subseteq V$, let $\inc(S)\coloneqq \{ e \in E \mid e \cap S \neq \emptyset\}$ denote the set of edges incident to at least one vertex in $S$.
A \emph{temporal graph} $\TGcompact$ consists of a finite vertex set $V$ and a sequence of edge sets $E_1,\ldots,E_\tau \subseteq \binom{V}{2}$.
The pair~$(e,i)$ is a \emph{time-edge} of $\G$ if $e \in E_i$.
The graph~$(V,E_i)$ is called the~$i$-th layer of $\G$.
The \emph{size} of $\G$ is $|\G| \coloneqq |V| + \sum_{t=1}^\tau \max\{1,|E_t|\}$. 

\paragraph*{Parameterized Complexity.} 
Let~$\Sigma$ denote a
finite alphabet.
A \emph{parameterized problem}~$L\subseteq \{(x,k)\in \Sigma^*\times \mathbb N_0\}$ is a subset of all instances~$(x,k)$ in~$\Sigma^*\times \mathbb N_0$,
where~$k$ denotes the \emph{parameter}.
A parameterized problem~$L$ is 
\begin{inparaenum}[(i)]
\item \emph{fixed-parameter tractable} (or contained in the class \FPT) if there is an algorithm that decides every instance~$(x,k)$ for~$L$ in~$f(k)\cdot |x|^{O(1)}$ time,  
 \item contained in the class \XP if there is an algorithm that decides every instance~$(x,k)$ for~$L$ in~$|x|^{f(k)}$ time, and
 \item \emph{para-NP-hard} if $L$ is~\NP-hard for some constant value of the parameter,
\end{inparaenum}
where~$f$ is any computable function that only depends on the parameter.
Note that $\FPT\subseteq \XP$.
If a parameterized problem is \emph{\Wone-hard}, 
then it is presumably not in \FPT, and if it is para-NP-hard, then it is not in~\XP (unless P$\;=\;$\NP).
Further details can be found in~\cite{DF13}.

\section{First Results}
\label{sec:first-results}

This section sets the basis of our parameterized complexity analysis containing results
for most of the parameter combinations (except for~$n+\ell$).

We start with the general observation (which we will use later) that, for~$\ell=0$, the temporal order of the layers is irrelevant since every element in a $k$-activity timeline
covers only edges of one layer.
\begin{observation}
	\label{obs:permutation}
	Let $\pi\colon [\tau] \rightarrow [\tau]$ be a permutation.
	Then, the instance $((V,(E_i)_{i\in[\tau]}),k,0)$ is a yes-instance of \MTmax (\MTsum) 
	if and only if $((V,(E_{\pi(i)})_{i\in[\tau]}),k,0)$ is a yes-instance.
\end{observation}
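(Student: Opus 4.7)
The plan is straightforward: when $\ell=0$ the constraint $b-a\le 0$ forces every element of a $k$-activity timeline to have the form $(v,a,a)$, so each triple covers only edges present in one specific layer. Under this restriction, a timeline becomes essentially an assignment, for each time step~$t$, of a vertex cover of the layer~$(V,E_t)$, subject to the per-vertex budget of~$k$ picks across all time steps. The permutation $\pi$ merely relabels time steps, so I expect a trivial bijection between solutions.

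Concretely, I would prove both directions simultaneously. Assume $\vc$ is a $k$-activity timeline covering $(V,(E_i)_{i\in[\tau]})$ with $\max_{(v,a,b)\in\vc}(b-a)\le 0$ (equivalently $\sum_{(v,a,b)\in\vc}(b-a)\le 0$). By the above, every element of $\vc$ has the form $(v,t,t)$. Define
\[
  \vc' \coloneqq \{(v,\pi^{-1}(t),\pi^{-1}(t)) \mid (v,t,t)\in\vc\}.
\]
Since $\pi^{-1}$ is a bijection on $[\tau]$, the map $(v,t,t)\mapsto(v,\pi^{-1}(t),\pi^{-1}(t))$ is injective and preserves the per-vertex count, so $|\{(v,a,b)\in\vc'\}|=|\{(v,a,b)\in\vc\}|\le k$ for every $v\in V$. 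Also, every element of $\vc'$ has the form $(v,s,s)$, so the objective value (max or sum of $b-a$) is~$0$.

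It remains to verify coverage of $(V,(E_{\pi(i)})_{i\in[\tau]})$. Fix $s\in[\tau]$ and $\{u,v\}\in E_{\pi(s)}$. Setting $t\coloneqq\pi(s)$, the edge $\{u,v\}$ lies in $E_t$, so by the coverage property of $\vc$ there exists $w\in\{u,v\}$ with $(w,t,t)\in\vc$. Then $(w,\pi^{-1}(t),\pi^{-1}(t))=(w,s,s)\in\vc'$, covering $\{u,v\}$ at time step~$s$ in the permuted instance. The converse direction is obtained by applying the same construction to $\pi^{-1}$ in place of $\pi$.

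There is essentially no obstacle here; the only thing to be careful about is keeping the direction of the permutation and its inverse straight between the ``time step in the original'' and ``time step in the permuted'' instance. Since the statement covers both \MTmax and \MTsum and the constructed $\vc'$ has objective~$0$ under either measure, a single construction handles both cases.
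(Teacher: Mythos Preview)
Your proposal is correct and follows exactly the same idea as the paper, which treats this as an observation without a formal proof: it merely notes in the preceding sentence that for $\ell=0$ every element of a $k$-activity timeline covers edges of a single layer, so the temporal order is irrelevant. Your argument just spells out this bijection explicitly, and the direction of $\pi$ versus $\pi^{-1}$ is handled correctly.
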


\noindent
Moreover, if all layers of the temporal graph are equal, 
then, for~$\ell=0$, both problems are equivalent to an extension of $k$-coloring.
For a graph $G=(V,E)$ and $a,b \in \mathbb{N}$ with $a \geq b$,
an \emph{$(a{:}b)$-coloring} of~$G$ is a function $c \colon V \rightarrow \binom{[a]}{b}$ such that $c(u) \cap c(v) = \emptyset$ for all $\{u,v\} \in E$.
Note that a $(k{:}1)$-coloring of a graph is simply a $k$-coloring.
\newcommand{\abColoring}[1]{\textsc{$(#1)$-Coloring}}%
Given a graph $G$, the problem \abColoring{a{:}b} asks 
whether there is an $(a{:}b)$-coloring for~$G$.
We get the following.
\begin{theorem}
  \label{thm:coloring-equivalence}
  If $E_1=E_2=\dots=E_\tau$, ~$\ell=0$, and $\tau \geq k$, then \MTmax (\MTsum) is equivalent to~\textsc{$(\tau{:}\tau-k)$-Coloring}.  
\end{theorem}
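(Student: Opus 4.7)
The plan is to unfold the definition in this restricted setting and observe that a solution amounts, for each vertex~$v$, to selecting a subset $S_v\subseteq[\tau]$ with $|S_v|\le k$ listing the time steps $t$ for which the singleton interval $(v,t,t)$ is chosen (intervals must have $a=b$ because $\ell=0$). Since every interval has length zero, both objectives $\max_{(v,a,b)\in\vc}(b-a)$ and $\sum_{(v,a,b)\in\vc}(b-a)$ evaluate to $0\le\ell$ automatically, so the problems coincide and only the coverage constraint remains to be analyzed.

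First, I would translate coverage. Because $E_1=\cdots=E_\tau=E$, an edge $\{u,v\}\in E$ appears at \emph{every} time step, so the coverage requirement becomes: for every $t\in[\tau]$ and every $\{u,v\}\in E$, $t\in S_u\cup S_v$. Quantifying over~$t$ this is equivalent to $S_u\cup S_v=[\tau]$ for every $\{u,v\}\in E$. Taking complements $T_v\coloneqq[\tau]\setminus S_v$, this becomes $T_u\cap T_v=\emptyset$, while $|S_v|\le k$ turns into $|T_v|\ge\tau-k$, a quantity that is nonnegative by the hypothesis $\tau\ge k$. At this point one already recognises the shape of an $(\tau{:}\tau-k)$-coloring.

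Next, I would establish the two directions of the equivalence. For the forward direction, starting from a covering $k$-activity timeline $\vc$, I form $T_v$ as above and, if $|T_v|>\tau-k$, shrink it arbitrarily to a subset $c(v)\in\binom{[\tau]}{\tau-k}$; the disjointness $c(u)\cap c(v)\subseteq T_u\cap T_v=\emptyset$ is preserved, so $c$ is a $(\tau{:}\tau-k)$-coloring of $G=(V,E)$. For the backward direction, given a $(\tau{:}\tau-k)$-coloring $c$, I set $S_v\coloneqq [\tau]\setminus c(v)$ (so $|S_v|=k$) and define $\vc\coloneqq\{(v,t,t)\mid v\in V,\ t\in S_v\}$; the translation from the previous paragraph, read in reverse, shows that $\vc$ covers $\G$.

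There is no substantial obstacle: the argument is essentially a dictionary between intervals of length zero and their time-step complements. The only point to be careful about is that \MTmax/\MTsum only demand \emph{at most} $k$ intervals per vertex, which corresponds to $|T_v|\ge\tau-k$ rather than equality; the shrinking step in the forward direction handles this asymmetry cleanly, and the condition $\tau\ge k$ is exactly what is needed for this shrinking to be possible.
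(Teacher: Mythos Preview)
Your proof is correct and takes essentially the same approach as the paper: both arguments pass to the complements of the active-time-step sets to recognise the $(\tau{:}\tau-k)$-coloring structure. Where the paper handles the ``at most $k$'' asymmetry by first padding each vertex to exactly $k$ intervals, you equivalently shrink each $T_v$ down to size $\tau-k$; these are the same normalisation viewed from opposite sides, and your discussion of why $\tau\ge k$ is needed is a welcome clarification.
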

\begin{proof}
	Suppose that $\vc$ is a $k$-activity timeline that covers $\G=(V,E_1,\ldots,E_\tau)$ with $E_1=\dots=E_\tau \eqqcolon E$ and $\ell=0$.
	We can assume that $\vc$ contains exactly~$k$ intervals for each~$v\in V$.
	Let $G=(V,E)$.
	Then, $c \colon V \rightarrow \binom{[\tau]}{\tau-k}$ with $c(v) \coloneqq \{t \in [\tau] \mid (v,t,t) \notin \vc \}$ is a $(\tau{:}\tau-k)$-coloring of $G$.
	
	Conversely, if $c \colon V \rightarrow \binom{[\tau]}{\tau-k}$ is a $(\tau{:}\tau -k)$-coloring of $G=(V,E)$, then~$\vc \coloneqq \{ (v,t,t)\in V\times[\tau]\times[\tau] \mid v\in V, t \notin c(v) \}$ is a $k$-activity timeline that covers $\G$.
\end{proof}

\noindent
The \NP-hardness of \textsc{3-Coloring}, or \textsc{$(3{:}1)$-Coloring}, implies the following.

\begin{corollary}
  \label{cor:nphard-tau-k-l}
  \MTmax and \MTsum are \NP-hard, even if $k=2$, $\ell=0$, $\tau=3$, and all layers are identical.
\end{corollary}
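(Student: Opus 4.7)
The plan is to obtain the corollary as an immediate consequence of \Cref{thm:coloring-equivalence} together with the classical \NP-hardness of $3$-\textsc{Coloring}. Concretely, I would instantiate the theorem with the parameters $\tau=3$, $k=2$, and $\ell=0$; since the required hypothesis $\tau \ge k$ is satisfied and we are free to assume all layers identical, the theorem tells us that, on inputs of this restricted form, both \MTmax and \MTsum are computationally equivalent to $(\tau{:}\tau-k)$-\textsc{Coloring}, which here is $(3{:}1)$-\textsc{Coloring}, i.e., ordinary $3$-\textsc{Coloring}.

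The reduction direction I would spell out is the straightforward one: given an arbitrary graph $G=(V,E)$, construct the temporal graph $\G\coloneqq (V,(E)_{i=1}^{3})$ with three identical layers, and output the instance $(\G,k=2,\ell=0)$ of \MTmax (respectively \MTsum). This is a polynomial-time reduction, and by \Cref{thm:coloring-equivalence} the resulting instance is a yes-instance if and only if $G$ admits a $3$-coloring. Since $3$-\textsc{Coloring} is \NP-hard, \NP-hardness of both variants transfers under the stated restrictions.

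The main (and really only) thing to check is that the hypothesis of \Cref{thm:coloring-equivalence} is met under the claimed restrictions and that $(3{:}1)$-\textsc{Coloring} coincides with $3$-\textsc{Coloring} in the sense defined just before that theorem; both are essentially definitional, so there is no genuine obstacle. I do not anticipate any hard step here, as the corollary is really a statement of the form \emph{plug in the numbers}.
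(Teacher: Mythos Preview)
Your proposal is correct and matches the paper's own argument essentially verbatim: the paper derives the corollary directly from \Cref{thm:coloring-equivalence} by noting that with $\tau=3$, $k=2$, $\ell=0$ the problem coincides with $(3{:}1)$-\textsc{Coloring}, i.e., $3$-\textsc{Coloring}, which is \NP-hard.
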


\noindent
\citet{BKPSW19} mention a result by~\citet{Nederlof08} showing that \textsc{$(a{:}b)$-Coloring} can be solved in~$(b+1)^n\cdot n^{O(1)}$ time. They show that the ETH implies that there is no~$2^{o(\log b)n}$-time algorithm for \abColoring{a{:}b}~\cite[Theorem~1.1]{BKPSW19}.
Thus, we have the following.

\begin{corollary}\label{cor:ETHcoloring}
  \MTmax (\MTsum) with $E_1=E_2=\dots=E_\tau$ and~$\ell=0$ cannot be solved in~$f(\tau-k)\cdot 2^{o(\log(\tau-k))n}$ time for any function~$f$ unless the ETH fails.
\end{corollary}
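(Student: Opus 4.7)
The plan is to transfer the ETH-based lower bound of \citet[Theorem~1.1]{BKPSW19} for \abColoring{a{:}b} directly through the polynomial-time equivalence established in \Cref{thm:coloring-equivalence}. Given an instance $(G, a, b)$ of \abColoring{a{:}b} with $G = (V, E)$, $a \geq b \geq 1$, and $|V| = n$, I would construct the \MTmax instance $((V, E_1, \ldots, E_a), k, \ell)$ by setting $E_i := E$ for all $i \in [a]$, $k := a - b$, and $\ell := 0$. The construction is polynomial, and by \Cref{thm:coloring-equivalence} (whose precondition $\tau \geq k$ reduces to $a \geq a - b$, i.e., $b \geq 0$), the constructed instance is a yes-instance of \MTmax iff $G$ admits an $(a{:}b)$-coloring. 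The same reduction applies to \MTsum. The crucial observation is that the new parameters satisfy $\tau = a$ and $\tau - k = b$, so the parameter $\tau - k$ of the constructed instance coincides exactly with the parameter $b$ of the original one.

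Suppose now, for contradiction, that there is an algorithm solving \MTmax (respectively \MTsum) in time $f(\tau - k) \cdot 2^{o(\log(\tau - k))n}$ for some computable function $f$. Composing it with the reduction above yields an algorithm for \abColoring{a{:}b} running in time $f(b) \cdot 2^{o(\log b)n} + (na)^{O(1)}$, which contradicts the ETH lower bound of \citet[Theorem~1.1]{BKPSW19}.

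The only substantive step to verify is that the lower bound from \cite{BKPSW19} indeed rules out running times of the form $f(b) \cdot 2^{o(\log b) n}$, and not merely pure $2^{o(\log b) n}$. This is standard: the $f(b)$ prefactor depends only on the parameter and can be absorbed into the exponent, since for any $\varepsilon > 0$ and sufficiently large $n$ (relative to $b$) we have $f(b) \leq 2^{\varepsilon \log(b) \cdot n}$, so an algorithm of the alleged form would still contradict the stated bound on large enough instances. I do not expect any other obstacle, as \Cref{thm:coloring-equivalence} already does all the work of bridging the two problems.
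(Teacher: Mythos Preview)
Your proposal is correct and follows exactly the route the paper intends: the corollary is stated without proof as a direct consequence of \Cref{thm:coloring-equivalence} together with the ETH lower bound for \abColoring{a{:}b} from \citet{BKPSW19}, and your write-up simply spells out this inference (including the translation $\tau-k = b$). The additional remark on absorbing the $f(b)$ prefactor is not in the paper but is a reasonable clarification.
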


\noindent
Note that, if $\tau = 2$, then \MTmax and \MTsum are both trivial unless $k=1$.
\citet{RTG21} already showed that \MTmax is polynomial-time solvable for $k=1$. 
By contrast, we show that \MTsum is \NP-hard in the case of~$k=1$ and~$\tau=2$ (strengthening the \NP-hardness for~$k=1$ with unbounded~$\tau$ by \citet[Proposition~3]{RTG21}).
Moreover, \MTmax is \FPT when parameterized by~$\ell$ with this restriction.
This gives us the following.
\begin{theorem}
	\label{thm:tau=2}
	\MTsum with two layers is \NP-hard and \FPT for parameter~$\ell$.
\end{theorem}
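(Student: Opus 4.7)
The plan is to first reformulate the two-layer case as a simpler combinatorial problem. If $k\ge 2$ and $\tau=2$, every vertex can be given both intervals $[1,1]$ and $[2,2]$ at zero cost, so every such instance is a yes-instance and the only interesting case is $k=1$. When $k=1$ and $\tau=2$, each vertex is assigned at most one interval from $\{[1,1],[1,2],[2,2]\}$, and letting $S_t\subseteq V$ denote the vertices whose interval contains time~$t$, the coverage condition says that $S_1$ is a vertex cover of $(V,E_1)$ and $S_2$ is a vertex cover of $(V,E_2)$. Since only the length-$1$ interval $[1,2]$ contributes to the cost, the total cost equals $|S_1\cap S_2|$. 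So the problem becomes: given $G_1=(V,E_1)$ and $G_2=(V,E_2)$, find vertex covers $S_1$ of $G_1$ and $S_2$ of $G_2$ with $|S_1\cap S_2|\le \ell$.

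For the \NP-hardness claim I would reduce from \textsc{Odd Cycle Transversal} (OCT), which is \NP-hard. Given an OCT instance $(G=(V,E),k)$, build the temporal graph with two identical layers $E_1=E_2=E$ and set $\ell=k$. If $D$ is an OCT of $G$ of size at most $k$, fix a bipartition $(I_1,I_2)$ of $G-D$ and let $S_1=V\setminus I_1$, $S_2=V\setminus I_2$; both are vertex covers of $G$ with $S_1\cap S_2=D$. Conversely, given vertex covers $S_1,S_2$ of $G$ with $|S_1\cap S_2|\le k$, the sets $V\setminus S_1$ and $V\setminus S_2$ are independent and cover $V\setminus(S_1\cap S_2)$, so $G-(S_1\cap S_2)$ is bipartite and $S_1\cap S_2$ is an OCT of size at most~$k$.

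For the \FPT\ part I plan to reduce to \textsc{Almost 2-SAT}, the problem of deciding whether a given 2-SAT formula can be made satisfiable by deleting at most $\ell$ clauses, which is \FPT\ for parameter $\ell$ by a result of Razgon and O'Sullivan. Introduce variables $x_v$ and $y_v$ for each $v\in V$, interpreted as ``$v\in S_1$'' and ``$v\in S_2$''. For every $\{u,v\}\in E_1$ add $\ell+1$ copies of the clause $x_u\vee x_v$; for every $\{u,v\}\in E_2$ add $\ell+1$ copies of $y_u\vee y_v$; finally, for each $v\in V$ add a single ``soft'' clause $\neg x_v\vee \neg y_v$. Because every covering clause appears in $\ell+1$ identical copies, no deletion set of size at most $\ell$ can effectively remove any of them, so every witness deletion consists solely of soft clauses. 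A satisfying assignment for the surviving formula then yields vertex covers with $|S_1\cap S_2|\le \ell$, and conversely a valid timeline of cost at most $\ell$ induces such a deletion by removing the clauses $\neg x_v\vee \neg y_v$ for $v\in S_1\cap S_2$.

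The step I expect to be most delicate is arguing that the clause-replication trick makes the Almost 2-SAT optimum coincide exactly with the minimum intersection size. Without replication, Almost 2-SAT might ``prefer'' to delete a covering clause rather than to pay for a vertex in $S_1\cap S_2$; the $\ell+1$ multiplicity precisely prevents this and so secures the equivalence of the two optima, which is what makes the reduction go through.
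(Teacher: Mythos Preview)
Your proposal is correct and follows essentially the same approach as the paper: both parts use the identical reductions (from \textsc{Odd Cycle Transversal} with $E_1=E_2=E$ for \NP-hardness, and to \textsc{Almost 2-SAT} with replicated ``hard'' covering clauses and single ``soft'' clauses $\neg x_v\vee\neg y_v$ for the \FPT\ result). Your explicit reformulation of the $k=1$, $\tau=2$ case as ``find vertex covers $S_1,S_2$ minimizing $|S_1\cap S_2|$'' is a clean way to present the argument, but the underlying constructions and correctness arguments match the paper's proof.
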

\begin{proof}
  We show \NP-hardness with a reduction from \textsc{Odd Cycle Transversal}.
	In this problem, the input consists of a graph $G=(V,E)$ and an integer~$s$ and the task is to decide whether there is a vertex set~$X\subseteq V$ with~$|X| \leq s$ such that~$G-X$ is bipartite.
	This problem is \NP-hard~\cite{LY80}.
	
	Given an instance $(G=(V,E),s)$, the reduction outputs the instance $(\G\coloneqq(V,E_1 \coloneqq E,E_2 \coloneqq E),k \coloneqq 1,\ell \coloneqq s)$.
	If $X\subseteq V$ with $|X|\leq s$ is an odd cycle transversal in $G$ such that $V_1$ and $V_2$ are the two color classes in the bipartite graph $G-X$, then $\vc\coloneqq \{(v,1,2) \mid v \in X\} \cup \{ (v,i,i) \mid i \in \{1,2\}, v\in V_i\}$ is a $1$-activity timeline that covers $\G$ with $\sum_{(v,a,b)\in\vc}(b-a)\le s$.
	Conversely, if $\vc$ is a $1$-activity timeline that covers $\G$ with $\sum_{(v,a,b)\in\vc}(b-a)\le s$, then $X \coloneqq \{v \in V \mid (v,1,2) \in \vc\}$ is an odd cycle transversal of size at most $s$ in 
	$G$.

        For fixed-parameter tractability, we show that \MTsum with $\tau = 2$ can be reduced to \textsc{Almost 2-SAT}.
	In this problem, the input consists of a Boolean formula~$\varphi$ in $2$-CNF (with duplicate clauses allowed) and an integer $s$ and the task is to decide whether $\varphi$ can be made satisfiable by deleting at most $s$ clauses.
	The problem is known to be in \FPT when parameterized by~$s$~\cite{RO09}.
	
	Let $(\G=(V,E_1,E_2),k,\ell)$ be an instance for \MTsum.
	If $k \neq 1$, the instance is trivial.
	Otherwise, we introduce two Boolean variables~$x^v_1$ and~$x^v_2$ for each~$v \in V$.
	Intuitively, $x^v_i$ is true if $v$ is used to cover edges in~$E_i$.
	The reduction outputs the instance $(\varphi, s\coloneqq \ell)$, where $\varphi$ is constructed as follows:
	For every edge $\{u,v\} \in E_i$, $i\in\{1,2\}$, $\varphi$ contains $s + 1$ copies of the clause $(x^u_i \vee x^v_i)$ expressing that the edge must be covered.
	For every vertex~$v \in V$, the formula contains the clause~$(\neg x^v_1 \vee \neg x^v_2)$, stating that $v$ can only be used once unless this clause is deleted.

        Now, suppose that $X$ is a set of at most~$\ell$ clauses of $\varphi$ and that $\alpha$ is a satisfying assignment for all other clauses.
	Then, $\vc \coloneqq \{(v,1,2) \mid (\neg x^v_1\vee \neg x^v_2) \in X\} \cup \{(v,i,i) \mid \alpha(x^v_i) = \texttt{true}, (\neg x^v_1\vee \neg  x^v_2) \notin X\}$ is a $1$-activity timeline that covers $\G$ with $\sum_{(v,a,b)\in\vc}(b-a)\le \ell$.
	Conversely, if $\vc$ is a $1$-activity timeline that covers $\G$ with $\sum_{(v,a,b)\in\vc}(b-a)\le \ell$, then removing the at most~$\ell$ clauses in $\{(\neg x^v_1 \vee \neg x^v_2)\mid (v,1,2)\in\vc\}$ from $\varphi$ makes the formula satisfiable by the following assignment: $\alpha(x^v_i) = \texttt{true}$ if and only if $(v,i,i)\in\vc$ or $(v,1,2)\in\vc$.
\end{proof}

Since \MTmax and \MTsum are \NP-hard, even if $\tau+k+\ell$ is constant,
the interesting parameterizations remaining involve the number~$n$ of vertices. 
Clearly, combining~$n$ and~$\tau$ trivially yields fixed-parameter tractability for both problems,
as the overall instance size is bounded in these two parameters.
If we use only the number $n$ of vertices as a parameter, 
then we can use dynamic programming to show 
that both problems are in \XP.

\begin{theorem}
  \label{thm:XP-n-max}
  \MTmax is solvable in $(k+1)^n(\ell+2)^n2^n \tau  n^{O(1)}$ time.
\end{theorem}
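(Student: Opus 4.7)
The plan is to use a sweep-over-time dynamic programming algorithm where, for each time step $t \in \{0,1,\ldots,\tau\}$, the table records which global state vectors $\sigma \in \bigl(\{0,\ldots,k\}\times\{0,\ldots,\ell+1\}\bigr)^V$ are reachable. The state of a vertex $v$ at time $t$ is a pair $(i_v,a_v)$ in which $i_v$ counts how many intervals for $v$ have already started by time $t$ (so $0 \le i_v \le k$), and $a_v$ records how many consecutive time steps the current interval of $v$ has lasted up to and including $t$, with the sentinel value $a_v=0$ meaning that $v$ is inactive at time $t$. Since an interval $[a,b]$ with $b-a\le\ell$ spans at most $\ell+1$ time steps, $a_v$ ranges over $\ell+2$ values; combined with the $k+1$ values for $i_v$, this yields $(k+1)(\ell+2)$ possible per-vertex states and $\bigl((k+1)(\ell+2)\bigr)^n$ global state vectors per layer.

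The recurrence proceeds layer by layer. The base state at time $0$ has $i_v=a_v=0$ for every vertex. Given a reachable state $\sigma$ at time $t$, we branch on, for each vertex $v$ independently, whether $v$ is active at time $t+1$; this enumerates at most $2^n$ successors. The per-vertex update is straightforward: (i) continue the current interval by incrementing $a_v$, legal only if $a_v\le\ell$; (ii) close the current interval by setting $a_v:=0$; (iii) start a new interval by setting $a_v:=1$ and $i_v:=i_v+1$, legal only if $i_v<k$; (iv) remain inactive. The resulting successor state at time $t+1$ is kept only if the set $\{v : a_v^{\text{new}}\ge 1\}$ is a vertex cover of $(V,E_{t+1})$, which is a polynomial-time check. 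The instance is a yes-instance iff some state is reachable at time $\tau$.

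The running time is dominated by iterating, over the $\tau$ layers, through each of the at most $(k+1)^n(\ell+2)^n$ reachable states and expanding the $2^n$ successor choices, with a polynomial overhead per transition for the vertex-cover check and the per-vertex bookkeeping, giving the claimed bound $(k+1)^n(\ell+2)^n 2^n\tau\cdot n^{O(1)}$. No step is conceptually hard; the only delicate bookkeeping is (a) treating a freshly started interval at $t+1$ with length $1$ consistently with the sentinel $a_v=0$, and (b) checking the vertex-cover condition on the correct layer (i.e., the layer corresponding to the successor state, not the predecessor). Correctness in both directions follows by induction on $t$: any covering $k$-activity timeline with $\max(b-a)\le\ell$ projects to a sequence of reachable states, and conversely any reachable sequence of states at times $0,1,\ldots,\tau$ can be read off as a valid covering $k$-activity timeline with every interval length bounded by $\ell+1$ time steps, i.e., $b-a\le\ell$.
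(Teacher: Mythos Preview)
Your approach is essentially the same as the paper's: a layer-by-layer dynamic program whose state records, for each vertex, how many intervals have been used so far and how long the currently open interval has run. The paper encodes the second coordinate as $\ell_j\in\{-1,0,\dots,\ell\}$ (with $-1$ meaning ``inactive'') while you use $a_v\in\{0,1,\dots,\ell+1\}$ (with $0$ meaning ``inactive''); these are the same $\ell+2$ values up to a shift, and both yield a table of size $(k+1)^n(\ell+2)^n$ per layer.

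There is one small gap in your write-up that you should patch. You claim the transition enumerates at most $2^n$ successors by branching on whether each vertex is active at $t+1$, but then list four per-vertex options, two of which---(i) ``continue'' and (iii) ``start new''---both make the vertex active and lead to \emph{different} successor states (different $a_v$ and $i_v$). As written, the branching does not determine the successor, and naively allowing all options gives $3^n$ rather than $2^n$. The fix is a one-line canonicalization: whenever ``continue'' is legal (i.e.\ $1\le a_v\le\ell$), it can be preferred over ``close and start new'' without loss of generality, because any covering timeline with two adjacent intervals $(v,a,t),(v,t+1,b)$ and $t-a<\ell$ can be replaced by $(v,a,b)$ if $b-a\le\ell$, or by $(v,a,a+\ell),(v,a+\ell+1,b)$ otherwise, without increasing the interval count or violating the length bound. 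After this normalization the binary active/inactive branch uniquely determines the transition. The paper relies on the same kind of normalization, also implicitly: in its backward recurrence it only tests $\ell_j'\in\{-1,\min(\ell,i-2)\}$ rather than all values in $\{-1,0,\dots,\min(\ell,i-2)\}$, which is justified by the dual observation that an interval ending at $i-1$ can always be extended backwards to maximal length. So your gap is real but minor, and once stated, both proofs match.
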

\begin{proof}
  We solve an instance~$(\G=(V=\{v_1,\ldots,v_n\},(E_i)_{i\in[\tau]}),k,\ell)$ of \MTmax via dynamic programming.
  We define a Boolean table~$T$ of size~$\tau(k+1)^n(\ell+2)^n$ as follows:
  For each~$i\in[\tau]$, $k_1,\ldots,k_n\in\{0,\ldots,k\}$, $\ell_1,\ldots,\ell_n\in\{-1,0,\ldots,\ell\}$,
  $T[i,k_1,\ldots,k_n,\ell_1,\ldots,\ell_n]= \texttt{true}$ if and only if there exists a~$\vc \subseteq V\times[i]\times[i]$ with~$\max_{(v,a,b)\in\vc}(b-a)\le\ell$ that covers~$(V,E_1,\ldots,E_i)$ and, for each~$j\in[n]$, satisfies
  \begin{compactitem}
    \item $|\{(v_j,a,b)\in\vc\}|\le k_j$,
    \item$|\{(v_j,a,i)\in\vc\}|=0$ if~$\ell_j=-1$, and
    \item $(v_j,i-\ell_j,i)\in\vc$ if~$\ell_j \ge 0$.
  \end{compactitem}
  Note that we have a yes-instance if and only if~$T[\tau,k,\ldots,k,\ell_1,\ldots,\ell_n]=\texttt{true}$ for some~$\ell_1,\ldots,\ell_n$.
  
  For initialization, we set
  $T[1,k_1,\ldots,k_n,\ell_1,\ldots,\ell_n] \coloneqq \texttt{true}$ if and only if
  \begin{compactitem}
    \item $\ell_j\le 0$ for all~$j\in[n]$,
    \item $\ell_j = -1$ for all~$j\in[n]$ with~$k_j=0$,
    \item $k_j > 0$ for all~$\ell_j=0$, and
    \item $\{v_j\mid \ell_j=0\}$ is a vertex cover for~$(V,E_1)$.
  \end{compactitem}
  This is clear from the definition of~$T$.
  The table~$T$ can then be filled recursively as follows:
  
  $T[i,k_1,\ldots,k_n,\ell_1,\ldots,\ell_n] \coloneqq \texttt{true}$ if and only if
  \begin{compactenum}[(i)]
    \item $\ell_j\le i-1$ for all~$j\in[n]$,
    \item $\ell_j = -1$ for all~$j\in[n]$ with~$k_j=0$,
    \item $k_j > 0$ for all~$\ell_j\ge 0$,
    \item $\{v_j\mid \ell_j\ge 0\}$ is a vertex cover for~$(V,E_i)$, and
    \item $T[i-1,k_1',\ldots,k_n',\ell_1',\ldots,\ell_n']=\texttt{true}$ for some~$k_1',\ldots,k_n',\ell_1',\ldots,\ell_n'$, where
      \begin{compactitem}
        \item[$\circ$] $k_j'=k_j$ and~$\ell_j'=\ell_j-1$ if~$\ell_j>0$,
        \item[$\circ$] $k_j'=k_j-1$ and~$\ell_j'\in\{-1,\min(\ell,i-2)\}$ if~$\ell_j=0$, and
        \item[$\circ$] $k_j'=k_j$ and $\ell_j'\in\{-1,\min(\ell,i-2)\}$ if~$\ell_j=-1$.
      \end{compactitem}
    \end{compactenum}

    Conditions~(i)--(iv) ensure that the values~$k_j$ and~$\ell_j$ are consistent and yield a local vertex cover for layer~$i$.
    Condition~(v) checks whether the local solution extends to a valid solution for all previous layers.
    
    Filling an entry of the table can be done with at most~$2^n$ table look-ups. Thus, the overall running time is~$\tau(k+1)^n(\ell+2)^n2^nn^{O(1)}$.
\end{proof}
For \MTsum, one can achieve a better running time with similar dynamic programming---this even yields fixed-parameter tractability for~$n+k$ and \XP for~$n$.

\begin{theorem}
	\label{thm:FPT-n+k-sum}
	\MTsum is solvable in $(k+1)^n2^{O(n)}\tau(\ell+1)$ time.
\end{theorem}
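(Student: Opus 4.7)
The plan is to adapt the dynamic programming of \Cref{thm:XP-n-max}, replacing the per-vertex ``age'' coordinate $\ell_j$ (which contributes the factor $(\ell+2)^n$ for \MTmax) with a much lighter state tailored to the additive objective. The key observation is that, for the sum objective, all one needs to remember about currently open intervals is \emph{which} vertices have one open (a single bit per vertex) and the \emph{total} length accumulated so far (a single counter bounded by $\ell$). The exact starting time of each open interval is irrelevant, because extending an open interval by one time step contributes exactly~$1$ to the total sum regardless of its current length.

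Concretely, I would first define a Boolean table $T[i, k_1, \ldots, k_n, A, s]$ indexed by a time step $i \in [\tau]$, budgets $k_j \in \{0,\ldots,k\}$, an active set $A \subseteq V$, and a cumulative length $s \in \{0,\ldots,\ell\}$. The intended meaning is: $T[\cdot]=\texttt{true}$ iff there is a partial activity timeline for $(V, E_1, \ldots, E_i)$ such that vertex $v_j$ has already started exactly $k_j$ intervals, $A$ is precisely the set of vertices whose current interval covers time $i$ (so $A$ must be a vertex cover of $(V,E_i)$), and the sum of all interval lengths (open intervals counted up to $i$) equals $s$. The base case $i=1$ is forced: $s = 0$, every interval has the form $(v_j,1,1)$, and $k_j = \mathbf{1}[v_j \in A]$.

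Next, I would set up the recurrence by iterating over the active set $A'$ at time $i-1$. With respect to $(A', A)$, each vertex falls into one of four categories: \emph{extend} ($v_j \in A \cap A'$, contributing $1$ to the sum), \emph{open new} ($v_j \in A \setminus A'$, incrementing $k_j$ by one), \emph{close} ($v_j \in A' \setminus A$), or \emph{idle} ($v_j \notin A \cup A'$). Hence $T[i, k_1,\ldots, k_n, A, s]$ is $\texttt{true}$ iff $A$ is a vertex cover of $(V,E_i)$ and some $A' \subseteq V$ witnesses $T[i-1, k_1',\ldots,k_n', A', s - |A \cap A'|]=\texttt{true}$ with $k_j' = k_j - \mathbf{1}[v_j \in A \setminus A']$. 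The instance is a yes-instance iff some cell $T[\tau, k_1,\ldots,k_n, A, s]$ with $s \le \ell$ is $\texttt{true}$.

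Finally, I would bound the running time: there are $\tau (k+1)^n 2^n (\ell+1)$ cells, and each is filled by scanning at most $2^n$ choices of $A'$ (the layer-$i$ vertex-cover check is done once per cell), giving the claimed $(k+1)^n \cdot 2^{O(n)} \cdot \tau \cdot (\ell+1)$ bound. The main obstacle I anticipate is justifying the sum accounting: one must verify that the per-step increment $|A \cap A'|$ is correct, which comes from the fact that an open interval of eventual length $b-a$ is built up by exactly one unit at each of its $b-a$ transitions on which its owner lies in $A \cap A'$, so closing or opening events contribute $0$ on their own. Aside from this bookkeeping, everything is a routine adaptation of \Cref{thm:XP-n-max}.
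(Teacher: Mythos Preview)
Your overall approach matches the paper's: a DP indexed by time, per-vertex interval counts, the currently active set, and a single global sum counter. The running-time accounting is fine. But your transition has a real gap.

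When $v_j\in A\cap A'$ you allow only the ``extend'' option ($k_j'=k_j$, sum increases by~$1$). You are missing the possibility that $v_j$ closes its interval at time~$i-1$ and opens a \emph{fresh} interval at time~$i$: then $v_j$ is active at both $i-1$ and $i$, yet these are two different intervals, so $k_j'=k_j-1$ and the sum does \emph{not} increase. Without this branch, your DP rejects genuine yes-instances. Concretely, take $V=\{v_1,v_2,v_3\}$, $\tau=2$, $E_1=E_2=\binom{V}{2}$, $k=2$, $\ell=0$. Any vertex cover of a triangle has size at least~$2$, so any $A,A'$ with $|A|,|A'|\ge 2$ in a $3$-element set satisfy $|A\cap A'|\ge 1$; your recurrence thus forces $s\ge 1>\ell$ and declares ``no''. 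But $\{(v_1,1,1),(v_2,1,1),(v_1,2,2),(v_2,2,2)\}$ is a valid $2$-activity timeline with total sum~$0$.

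The fix is exactly what the paper does: for $v_j\in A\cap A'$ branch over $k_j'\in\{k_j,k_j-1\}$, and charge the sum only for those $v_j\in A\cap A'$ with $k_j'=k_j$. This raises the per-cell look-ups from $2^n$ to at most $4^n$, which is still $2^{O(n)}$, so the stated bound survives.
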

\begin{proof}
 Given an instance~$(\G=(V=\{v_1,\ldots,v_n\},(E_i)_{i\in[\tau]}),k,\ell)$, we define a Boolean table~$T$ of size~$\tau (\ell+1) (k+1)^n2^n$ as follows:
 For each~$i\in[\tau]$, $k_1,\ldots,k_n\in\{0,\ldots,k\}$, $S \subseteq V$, and~$\l\in\{0,\ldots,\ell\}$,
  $T[i,k_1,\ldots,k_n,S,l]= \texttt{true}$ if and only if there exists a~$\vc \subseteq V\times[i]\times[i]$ with~$\sum_{(v,a,b)\in\vc}(b-a)\le l$ that covers~$(V,E_1,\ldots,E_i)$ and, for each~$j\in[n]$, satisfies
  \begin{compactitem}
    \item $|\{(v_j,a,b)\in\vc\}|\le k_j$,
    \item$|\{(v_j,a,i)\in\vc\}|=0$ if~$v_j\not\in S$, and
    \item $|\{(v_j,a,i)\in\vc\}|>0$ if~$v_j \in S$.
    \end{compactitem}
    That is, we have a yes-instance if and only if~$T[\tau,k,\ldots,k,S,\ell]=\texttt{true}$ for some~$S\subseteq V$.
    By definition, we set~$T[1,k_1,\ldots,k_n,S,l]=\texttt{true}$ if and only if
    \begin{compactitem}
      \item $v_j\not\in S$ for all~$j\in[n]$ with~$k_j=0$ and 
      \item $S$ is a vertex cover for~$(V,E_1)$.
    \end{compactitem}
    The remaining table entries can then be computed recursively:

    $T[i,k_1,\ldots,k_n,S,l]=\texttt{true}$ if and only if
    \begin{compactenum}[(i)]
      \item $v_j\not\in S$ for all~$j\in[n]$ with~$k_j=0$,
      \item $S$ is a vertex cover for~$(V,E_i)$, and
      \item $T[i-1,k_1',\ldots,k_n',S',l']=\texttt{true}$ for some $S'\subseteq V$, where
        \begin{compactenum}[(a)]
          \item $k_j'=k_j$ if $v_j\not\in S$,
          \item $k_j'=k_j-1$ if~$v_j\in S\setminus S'$,
          \item $k_j'\in\{k_j-1,k_j\}$ if~$v_j\in S\cap S'$, and
          \item $l' = l - |\{j\in[n]\mid v_j\in S\cap S', k_j'=k_j\}|$.  
      \end{compactenum}
    \end{compactenum}

    Conditions~(i)--(iii) ensure that the values~$k_j$ are consistent with~$S$ and yield a local vertex cover for layer~$i$.
    Condition~(iv) checks whether the local solution extends to a valid solution for all previous layers. Note that~(c) captures the case that a solution~$\vc$ contains some~$(v_j,a,i-1)$ and~$(v_j,i,i)$ (in which case~$k_j' = k_j-1$).
    
    Thus, a table entry can be computed with at most~$4^n$ table look-ups leading to an overall running time of
    $\tau(\ell+1) (k+1)^n2^{O(n)}$.
\end{proof}

\noindent
Note that, 
\Cref{thm:XP-n-max} (i) also implies that
\MTmax parameterized by $n+k+\ell$ is in \FPT and 
that
\Cref{thm:XP-n-max} (ii) implies that
\MTsum parameterized by $n+k$ is in \FPT.
To show that \MTmax is also in \FPT when parameterized by $n+k$,
we observe that it is sufficient to consider only two possibilities to cover a time-edge.
This leads to a search tree algorithm similar to the one for the classical \textsc{Vertex Cover}.
\begin{theorem}
  \label{thm:fpt-n+k-max}
  \MTmax is solvable in~$O(2^{nk}n^2\tau)$ time.
\end{theorem}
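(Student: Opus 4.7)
The plan is to give a bounded search tree algorithm analogous to the classical branching algorithm for \textsc{Vertex Cover}. We maintain a partial timeline $\vc$ (initially empty). If $\vc$ already covers $\G$, accept. Otherwise, locate the smallest $t \in [\tau]$ for which $\{v \in V \mid (v,a,b)\in\vc,\, t \in [a,b]\}$ fails to be a vertex cover of $(V,E_t)$, pick any uncovered edge $\{u,v\}\in E_t$, and branch on two possibilities: either add the interval $(u,t,\min(\tau,t+\ell))$ to $\vc$, or add $(v,t,\min(\tau,t+\ell))$ to $\vc$. A branch is aborted immediately whenever the chosen vertex would thereby exceed its quota of $k$ intervals.

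For the running time, every non-terminating recursive call increases $|\vc|$ by exactly one. Since surviving branches satisfy $|\vc|\le nk$, the recursion depth is at most $nk$, so the search tree has $O(2^{nk})$ nodes. At each node I can find the earliest uncovered time-edge (or certify that none exists) in $O(n^2\tau)$ time with the help of an auxiliary $n\times\tau$ activity table, which is maintained in $O(\tau)$ time per interval addition. The total running time is therefore $O(2^{nk}n^2\tau)$.

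The main obstacle---and the only place where the greedy interval choice needs justification---is correctness. I will maintain the invariant that at every node with current partial solution $\vc$, there exists a covering $k$-activity timeline $\vc^*$ of maximum interval length at most $\ell$ with $\vc\subseteq\vc^*$. The base case is just the assumption that the instance is a yes-instance. For the inductive step, since $\vc^*$ covers the uncovered edge $(\{u,v\},t)$, it contains some interval $(u,a^*,b^*)$ or $(v,a^*,b^*)$ with $t\in[a^*,b^*]$. In the first case (the other is symmetric) I claim that
\[
  \vc^{**} := \bigl(\vc^* \setminus \{(u,a^*,b^*)\}\bigr) \cup \{(u,t,\min(\tau,t+\ell))\}
\]
is still a valid covering timeline and contains $\vc \cup \{(u,t,\min(\tau,t+\ell))\}$. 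The interval count for $u$ is unchanged by the swap, and $(u,a^*,b^*)\notin\vc$, since otherwise $t$ would already be covered by $\vc$. Every time step in $[t,b^*]$ remains covered by the replacement interval because $b^*\le a^*+\ell\le t+\ell$, and every time step in $[a^*,t-1]$ is already covered by $\vc\subseteq\vc^{**}$ by the minimality of $t$. Hence $\vc^{**}$ still covers $\G$, and the invariant is preserved along the branch matching the endpoint chosen by $\vc^*$, completing the induction and therefore the correctness proof.
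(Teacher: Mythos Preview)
Your proposal is correct and follows essentially the same bounded search tree as the paper: find the earliest uncovered time-edge, branch on covering it with a maximal-length interval at either endpoint, and bound the depth by $nk$. The paper's proof is terser and leaves the exchange argument implicit, whereas you spell out the invariant $\vc\subseteq\vc^*$ and the interval swap explicitly---this is a welcome addition, and your justification (in particular that $(u,a^*,b^*)\notin\vc$ and that layers before $t$ are already covered by $\vc$) is sound.
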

\begin{proof}
  Let $(\G=(V,(E_i)_{i\in[\tau]}),k,\ell)$ be a \MTmax instance.
  We solve the instance with a search tree algorithm. To this end, we store counters~$k_v$, $v\in V$, for the number of intervals chosen for each vertex. We initially set all counters to~zero and start with an empty solution~$\vc=\emptyset$.

  Let~$i=\min\{t\in[\tau]\mid E_t\neq\emptyset\}$ and let $\{u,v\}\in E_i$ be an arbitrary edge (if all layers are empty, then we return ``yes'').
  If~$k_u=k$ and~$k_v=k$, then we return ``no'' since we cannot cover this edge.
  Otherwise, we simply branch into the (at most two) options of either taking~$(u,i,i')$ (if~$k_u<k$) or~$(v,i,i')$ (if~$k_v<k$) into~$\vc$, where~$i'=\min(i+\ell,\tau)$. In each branch, we increase the corresponding counter ($k_u$ or~$k_v$) by one, delete all edges incident to the corresponding vertex ($u$ or~$v$) from the layers~$E_i,\ldots,E_{i'}$, and recursively proceed on the remaining instance.
  Clearly, the recursion terminates after at most~$nk$ calls. Thus, the running time is in~$O(2^{nk}n^2\tau)$.
\end{proof}

\noindent
As a $k$-activity timeline $\vc$ for a temporal graph $\G$ with $n$ vertices is of size at most $kn$, 
\Cref{thm:fpt-n+k-max} and \Cref{thm:XP-n-max} (ii) imply that
\MTmax and \MTsum are in \FPT when parameterized by the solution size $|\vc|$.

We continue with disentangling the parameterized complexity of \MTsum and \MTmax regarding $n$ and $\ell$ in \Cref{sec:sum-n-l,sec:max-n-l}.

\section{Parameterizing \MTsum by~$n+\ell$}\label{sec:sum-n-l}

The main goal of this section is to show the following.

\begin{theorem}
	\label{thm:sum-fpt-n+l}
	\MTsum parameterized by $n+\ell$ is in \FPT.
\end{theorem}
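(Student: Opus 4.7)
The plan is to exploit the length bound $\ell$ as a structural restriction on the solution. In any $k$-activity timeline $\vc$ with $\sum_{(v,a,b)\in\vc}(b-a)\le\ell$, there can be at most $\ell$ triples $(v,a,b)$ with $b>a$ (call these \emph{long intervals}), since each contributes at least~$1$ to the length sum. The remaining triples are \emph{point intervals} with $a=b$; they do not contribute to the length but still consume one of the $k$ interval slots of their vertex. Equivalently, viewing a solution through its per-time active sets $A_1,\ldots,A_\tau$, we have $\sum_{t=1}^{\tau-1}|A_t\cap A_{t+1}|\le\ell$, so at most $\ell$ vertex-time pairs $(v,t)$ with $v\in A_t\cap A_{t+1}$ may occur.

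The first step is to enumerate all possible \emph{long-interval profiles}: for each vertex $v\in V$, a multiset of lengths $(L_1^v,\ldots,L_{m_v}^v)$ with each $L_i^v\ge1$, describing the planned long intervals of $v$, subject to $\sum_{v,i}L_i^v\le\ell$. Because the total number of long intervals is at most $\ell$ and each is described by a vertex (at most $n$ choices) and a length (at most $\ell$ choices), the number of profiles is $(n\ell)^{O(\ell)}$, a function of $n+\ell$.

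For each profile I would solve the induced placement-and-cover sub-problem via a dynamic program over $t\in[\tau]$. The DP state records (a)~the status of each long interval (\emph{pending}, \emph{running with some remaining length in $\{1,\ldots,\ell+1\}$}, or \emph{completed}), giving $(\ell+2)^\ell$ options, and (b)~the set of vertices that were point-active at time $t-1$, giving $2^n$ options. The latter is needed to enforce isolation of point intervals, i.e.\ that no vertex has two point intervals at adjacent times nor a point interval adjacent to one of its long intervals (otherwise the induced maximal runs would differ from the fixed profile). A transition chooses which pending long intervals start at $t$, advances running intervals, and selects a set $I_t$ of point-active vertices so that $I_t$ together with the currently running chunks covers $E_t$ and all isolation constraints are met.

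The main obstacle is the per-vertex interval bound $r_v\le k$: since $k$ is not part of the parameter, naively adding per-vertex counters to the DP state would blow it up by a factor of $(k+1)^n$. My plan is to deal with this via a case split combined with \Cref{thm:FPT-n+k-sum}. When $k$ is bounded by a function of $n+\ell$, \Cref{thm:FPT-n+k-sum} already yields the desired FPT running time; the hard regime is large $k$. There, given a fixed long-interval placement, the remaining point-interval selection should be solvable in time polynomial in $\tau$ by a flow- or matching-based formulation, using that the layer/active-set ``types'' appearing over time come from a family of size only $2^{O(n^2)}$, so that the budget constraint $\sum_t[v\in I_t]\le k-m_v$ and the per-step vertex-cover constraints can be handled globally rather than tracked in the DP state. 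Getting this large-$k$ argument to respect the point-interval isolation constraints together with the per-vertex budgets is the most delicate step and the main technical challenge of the proof.
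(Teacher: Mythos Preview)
Your decomposition into long intervals and point intervals, and your identification of the per-vertex budget $k$ as the obstacle to a straightforward DP, both match the paper. The gap is in how you handle large $k$. You write ``given a fixed long-interval placement, the remaining point-interval selection should be solvable \ldots'', and indeed it is---this is precisely the nonuniform $\ell=0$ problem, which the paper solves in \Cref{thm:ell=0FPTn} via an ILP with $2^{O(n^2)}$ variables and Lenstra's algorithm (your ``layer-type'' intuition made precise). But you never explain how to obtain a placement. Your DP can enumerate placements, yet two DP paths reaching the same final state can leave entirely different residual temporal graphs, so feasibility of the residual $\ell=0$ instance depends on the whole path, not the state. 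There are $\tau^{\Theta(\ell)}$ placements of a fixed length profile, and checking each is \XP, not \FPT; carrying enough information in the DP state to determine the residual (a histogram over $2^{\binom{n}{2}}$ layer types with counts up to $\tau$) is likewise not \FPT. The case split on $k$ does not help here, because even for arbitrarily large $k$ the residual feasibility is placement-sensitive.

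The missing ingredient is to guess more than lengths. The paper enumerates, for each maximal block of overlapping long intervals, the intervals \emph{together with the exact sequence of edge sets} of the layers the block sits on (a ``solution pattern''). Since each block spans at most $\ell+1$ layers, there are only $2^{O(n^2\ell^2)}$ such patterns, and at most $\ell$ of them occur, so ordered sequences of patterns can be enumerated in \FPT time. The point of including the layer contents in the guess is that once the pattern sequence is fixed, \emph{every} placement of it in $\G$---in particular, the greedy earliest-occurrence placement---yields the same residual temporal graph up to a permutation of the layers; and by \Cref{obs:permutation}, layer order is irrelevant for the $\ell=0$ residual. This is what collapses the $\tau^{\Theta(\ell)}$ placement search to a single greedy choice, after which \Cref{thm:ell=0FPTn} finishes the job. (As an aside, your isolation constraints are unnecessary: two adjacent point intervals of the same vertex are a legal configuration contributing $0$ to the length sum, so there is no need to force the maximal runs to coincide with the guessed profile.)
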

The formal proof is deferred to the end of this section.
The general idea behind the algorithm is to split the $k$-activity timeline $\vc$
into two parts $\vc_0 \uplus \vc_{>0} = \vc$.
Here, $\vc_0$ contains all elements~$(v,a,b)\in \vc$ with~$a=b$ and
$\vc_{>0}$ contains all elements~$(v,a,b)\in \vc$ with~$a<b$.
Note that~$\sum_{(v,a,b)\in\vc_{0}}(b-a)= 0$ and 
that~$\sum_{(v,a,b)\in\vc_{>0}}(b-a)\le \ell$.
We observe that $\vc_{>0}$ induces an interval graph and 
that the number of possible induced interval graphs is upper-bounded by a function only depending on~$n+\ell$.
The algorithm iterates over all of these interval graphs and identifies the part of 
our temporal graph which can be covered by a corresponding~$\vc_{>0}$.
It remains to find a $\vc_0$ that covers the rest of our temporal graph.
To this end, we will show that \MTsum parameterized by $n$ is in \FPT{} if $\ell=0$.
In fact, as both problems are equivalent if $\ell=0$,
we show that \MTmax and \MTsum are both in \FPT when parameterized by~$n$ in this case.
In the proof of \Cref{thm:sum-fpt-n+l},
we will need a slightly more general version of these problems, 
in which there is not a single value $k$, but a value $k_v$ for each $v \in V$.
We call this generalization \textsc{Nonuniform \MTmax (\MTsum)}.

\begin{lemma}\label{thm:ell=0FPTn}
  \textsc{Nonuniform \MTmax (\MTsum)} parameterized by~$n$ is in \FPT if~$\ell=0$.
\end{lemma}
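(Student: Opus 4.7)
The plan is to reduce the problem to an integer linear program (ILP) whose number of variables is bounded by a function of $n$, and then invoke Lenstra's algorithm for ILP feasibility.

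First I would observe that when $\ell=0$ the two variants coincide: every interval $(v,a,b)\in\vc$ must satisfy $a=b$, so $\max(b-a)=\sum(b-a)=0$ automatically. Furthermore, by \Cref{obs:permutation}, the temporal order of the layers is irrelevant. Thus the only information left in the instance is, for each labeled graph $H$ on $V$, the multiplicity $\tau_H := |\{i\in[\tau] : (V,E_i)=H\}|$. The number of distinct layer types is at most $2^{\binom{n}{2}}$, a function of $n$ alone. Moreover, a solution amounts to choosing, for each of the $\tau_H$ copies of each type $H$, a vertex cover of $H$ to mark as ``active'' in that layer, so that for every $v\in V$ the total number of copies in which $v$ is chosen does not exceed $k_v$.

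Next I would model this as an ILP. For every layer type $H$ and every vertex cover $C$ of $H$ (there are at most $2^n$ such $C$ per $H$), introduce a nonnegative integer variable $x_{H,C}$ counting how many copies of $H$ receive $C$ as their active set. The constraints are
\begin{align*}
  \sum_{C \in \mathcal{VC}(H)} x_{H,C} &= \tau_H \quad \text{for each type } H,\\
  \sum_{(H,C)\,:\,v\in C} x_{H,C} &\le k_v \quad \text{for each } v\in V,\\
  x_{H,C} &\in \mathbb{Z}_{\ge 0}.
\end{align*}
A feasible solution of this ILP yields a valid nonuniform $(k_v)_{v\in V}$-activity timeline with $\ell=0$ by arbitrarily assigning the cover $C$ to $x_{H,C}$ of the copies of $H$; conversely, any valid timeline produces such a feasible assignment by counting. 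The total number of variables is at most $2^{\binom{n}{2}}\cdot 2^n$, which depends only on $n$.

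Finally, I would invoke Lenstra's theorem: an ILP feasibility instance with $p$ variables and input size $L$ can be decided in $p^{O(p)}\cdot L^{O(1)}$ time. Since $p\le 2^{\binom{n}{2}+n}=:f(n)$ and $L$ is polynomial in the size of the original instance (the coefficients $\tau_H$ and $k_v$ are bounded by $\tau$ and the integers in the input), the overall running time is $f(n)\cdot |\mathcal{G}|^{O(1)}$, which is \FPT in $n$.

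The main obstacle is essentially conceptual rather than technical: one has to notice that the permutation symmetry from \Cref{obs:permutation} collapses the layer sequence into a short list of multiplicities so that the problem becomes a ``covering with a budget per vertex'' that is naturally expressible as a small ILP. Once this reformulation is in place, Lenstra's theorem (FPT parameterized by the number of ILP variables) does the rest, uniformly handling both the \MTmax and \MTsum versions of the nonuniform problem because they coincide when $\ell=0$.
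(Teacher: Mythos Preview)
Your proposal is correct and takes essentially the same approach as the paper: both formulate an ILP with one nonnegative integer variable per pair (layer type, vertex cover of that type), impose the multiplicity and per-vertex budget constraints, bound the number of variables by $2^{\binom{n}{2}+n}$, and apply Lenstra's algorithm. The only differences are notational and that you spell out the equivalence of the two variants for $\ell=0$ and the appeal to \Cref{obs:permutation} a bit more explicitly.
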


\begin{proof}
  Given an instance~$\mathcal{I}=(\G=(V,(E_i)_{i\in[\tau]}),(k_v)_{v\in V},0)$,
  we will create an integer linear program (ILP) with a number of variables bounded by some function of~$n$.
  This ILP is feasible if and only if~$\mathcal{I}$ is a yes-instance.
  By Lenstra's algorithm~\cite{L83}, this implies that \textsc{Nonuniform \MTmax (\MTsum)} parameterized by~$n$ is in \FPT, if~$\ell=0$.
  
  We use a variable $X_E^S$ for every $E\subseteq \binom{V}{2}$ and $S\subseteq V$ where $S$ is a vertex cover for~$(V,E)$.
  Note that the number of these variables is at most $2^{\binom{n}{2}+n}$.
  Intuitively, the value of the variable~$X_E^S$ gives us the number of times the vertex cover $S$ is used to cover a layer with edge set~$E$.
  
  For $E\subseteq \binom{V}{2}$, let $a(E)\coloneqq |\{ t \in [\tau]\mid E_t = E\}|$ denote the number of times the edge set~$E$ appears as a layer in~$\G$.
  Let $\vcs(E)\coloneqq \{S \subseteq V \mid S \text{ is a vertex cover of } (V,E)\}$.
  Then, the ILP constraints are as follows:
  \begin{alignat}{3}
		\sum_{S \in \vcs(E)} X_E^S & = a(E), &&\text{ for all } E\subseteq \binom{V}{2},\label{eqn:cover}\\
  		\sum_{E\subseteq \binom{V}{2}}\sum_{\substack{S \in \vcs(E) \\ \text{s.t. } v \in S}} X_E^S & \leq k_v, &&\text{ for all } v \in V,\label{eqn:k}\\
  		X_E^S &\in \mathbb{N}, &&\text{ for all } E\subseteq \binom{V}{2}, S \in \vcs(E).\nonumber
  \end{alignat}
  As we mentioned before, we are only interested in the feasibility of this ILP, so there is no objective function to optimize.

  The above ILP is feasible if and only if $\mathcal{I}$ is a yes-instance:
  Suppose that $(X_E^S)_{E\subseteq\binom{V}{2},S \in \vcs(E)}$ is a solution.
  We define a solution~$\vc$ of~$\mathcal{I}$ as follows:
  For any $E\subseteq \binom{V}{2}$, let $\vcs(E) = \{S^E_1,\ldots,S^E_{r_E}\}$.
  We use $S^E_1$ to cover the first $X_E^{S_1}$ appearances of $(V,E)$ in the layers of~$\G$, that is, we add $(v,t,t)$ to~$\vc$ for every~$v\in S^E_1$ and appearance~$(V,E_t)$.
  Then we continue with $S^E_2$ to cover the next $X_E^{S_2}$ appearances, and so on.
  Condition~(\ref{eqn:cover}) guarantees that $\vc$ covers~$\G$ and
  Condition~(\ref{eqn:k}) ensures that at most~$k_v$ intervals are chosen for each vertex~$v\in V$.
  
  Conversely, suppose that~$\mathcal{I}$ admits a solution $\vc$.
  For $t \in [\tau]$, let $X_t \coloneqq \{ v \in V \mid (v,t,t) \in \vc\}$.
  Since~$\ell=0$, $X_t$ must be a vertex cover for~$(V,E_t)$.
  Then, set $X_E^S \coloneqq |\{ t \in [\tau] \mid E_t = E \text{ and } X_t = S \}|$.
  It is easy to check that this yields a solution of the ILP.
\end{proof}

As a side result, \Cref{thm:coloring-equivalence,thm:ell=0FPTn} together imply the following.
\begin{corollary}
	\textsc{$(a{:}b)$-Coloring} parameterized by the number of vertices is in \FPT.
\end{corollary}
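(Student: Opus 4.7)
The plan is to reduce \abColoring{a{:}b} to the $\ell=0$ case of \MTsum and invoke \Cref{thm:ell=0FPTn}. Given an instance $(G=(V,E),a,b)$ with $n=|V|$, I would first dispatch the degenerate cases $a<b$ (reject) and $b=0$ (accept), and otherwise assume $a\geq b\geq 1$. I would then build the temporal graph $\G:=(V,(E_i)_{i\in[a]})$ with $E_i:=E$ for every $i\in[a]$, and set $k:=a-b$ and $\ell:=0$. Since $\tau=a\geq a-b=k$, \Cref{thm:coloring-equivalence} applies and yields that $G$ is $(a{:}b)$-colorable if and only if $(\G,k,0)$ is a yes-instance of \MTsum. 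Combining this equivalence with \Cref{thm:ell=0FPTn} would give an \FPT-algorithm parameterized by~$n$.

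The main obstacle is a subtle but crucial running-time issue: in a typical encoding of \abColoring{a{:}b}, the numbers $a$ and~$b$ are given in binary, so $a$ may be exponentially larger than the input size. Literally materializing $a$ identical layers of $\G$ is therefore too expensive and would not yield an \FPT-algorithm in~$n$ alone. To get around this, I would not construct~$\G$ explicitly; instead, I would directly assemble the ILP used in the proof of \Cref{thm:ell=0FPTn} from the coloring input. In our reduction, exactly one edge set has nonzero multiplicity, namely $a(E)=a$, so the ILP has at most $2^{\binom{n}{2}+n}$ variables, exactly $2^n$ of which are nontrivial, and all coefficients have bit length~$O(\log a)$. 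Lenstra's algorithm then decides feasibility in time $f(n)\cdot\mathrm{poly}(n,\log a)$ for some computable~$f$, which is \FPT when parameterized by~$n$. Once this succinct reformulation is in place, the remainder of the argument is just the combination of the equivalence from \Cref{thm:coloring-equivalence} with the correctness of the ILP established in the proof of \Cref{thm:ell=0FPTn}.
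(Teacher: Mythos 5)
Your proposal is correct and follows essentially the same route as the paper, which proves the corollary exactly by combining the equivalence of \Cref{thm:coloring-equivalence} (with $\tau=a$, $k=a-b$, $\ell=0$) with the ILP-based algorithm of \Cref{thm:ell=0FPTn}. Your additional observation about binary-encoded $a$ and $b$---assembling the ILP directly with $a(E)=a$ rather than materializing $a$ identical layers, so that Lenstra's algorithm runs in $f(n)\cdot\mathrm{poly}(n,\log a)$ time---is a valid refinement of a point the paper leaves implicit, not a different approach.
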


We are now set to show  \Cref{thm:sum-fpt-n+l}:
\MTsum parameterized by~$n+\ell$ is in \FPT.
The algorithm in the following proof is illustrated in \Cref{fig:sum-fpt-n+l}.

\begin{proof}[Proof of \Cref{thm:sum-fpt-n+l}]
	\begin{figure}[t]
		\centering
		\begin{tikzpicture}[xscale=1.2,yscale=0.8]
			\def\nsc{0.33}
			\tikzstyle{xnode}=[circle,scale=\nsc,draw,fill=black];
			\tikzstyle{singleton}=[rotate=45,rectangle,scale=3.25*\nsc,fill=gray,opacity=0.5];
			
			\node () at (0.25,1) {$v_1$};
			\node () at (0.25,2) {$v_2$};
			\node () at (0.25,3) {$v_3$};
			\node () at (0.25,4) {$v_4$};
			\node () at (0.25,5) {$v_5$};
			
			\foreach \x in {1,...,10} {
				\draw[loosely dotted] (\x+0.5,0.5) -- (\x+0.5,5.5);
			}
			
			\foreach \x in {1,...,11} {
				\foreach \y in {1,...,5} {
					\node (\x!v\y) at (\x,\y)[xnode]{};
				}
			}
			\draw (1!v1) -- (1!v2);
			\draw (1!v4) -- (1!v5);
			\draw (1!v2) .. controls (0.75,3.5) .. (1!v5);
			\draw (2!v3) -- (2!v4);
			\draw (2!v4) -- (2!v5);
			\draw (2!v2) .. controls (1.75,3) .. (2!v4);
			\draw (3!v2) -- (3!v3);
			\draw (3!v3) -- (3!v4);
			\draw (3!v4) -- (3!v5);
			\draw (3!v1) .. controls (3.25,2) .. (3!v3);
			\draw (3!v1) .. controls (2.5,2.5) .. (3!v4);
			\draw (3!v2) .. controls (2.75,3) .. (3!v4);
			\draw (4!v2) -- (4!v3);
			\draw (4!v3) -- (4!v4);
			\draw (4!v1) .. controls (3.75,2) .. (4!v3);
			\draw (4!v3) .. controls (3.75,4) .. (4!v5);
			\draw (5!v1) .. controls (5.25,2) .. (5!v3);
			\draw (5!v2) -- (5!v3);
			\draw (5!v2) .. controls (4.75,3) .. (5!v4);
			\draw (5!v3) -- (5!v4);
			\draw (5!v4) -- (5!v5);
			\draw (6!v1) -- (6!v2);
			\draw (6!v3) -- (6!v4);
			\draw (6!v4) -- (6!v5);
			\draw (6!v2) .. controls (5.75,3) .. (6!v4);
			\draw (7!v4) -- (7!v3);
			\draw (7!v4) -- (7!v5);
			\draw (7!v4) .. controls (6.75,3) .. (7!v2);
			\draw (7!v4) .. controls (7.25,2.5) .. (7!v1);
			\draw (8!v2) -- (8!v3);
			\draw (9!v2) -- (9!v3);
			\draw (9!v4) -- (9!v5);
			\draw (9!v2) .. controls (8.75,3) .. (9!v4);
			\draw (9!v2) .. controls (9.25,3.5) .. (9!v5);
			\draw (10!v1) -- (10!v2);
			\draw (10!v2) -- (10!v3);
			\draw (10!v1) .. controls (10.25,2) .. (10!v3);
			\draw (10!v1) .. controls (10.5,2.5) .. (10!v4);
			\draw (10!v1) .. controls (9.5,3) .. (10!v5);
			\draw (10!v2) .. controls (9.75,3) .. (10!v4);
			\draw (11!v1) -- (11!v2);
			\draw (11!v1) .. controls (11.25,2) .. (11!v3);
			\draw (11!v1) .. controls (11.5,2.5) .. (11!v4);
			\draw (11!v1) .. controls (10.75,3) .. (11!v5);

			\fill[rounded corners,gray,opacity=0.5] (1.8, 3.75) rectangle (3.2, 4.25) {};
			\fill[rounded corners,gray,opacity=0.5] (2.8, 2.75) rectangle (5.2, 3.25) {};
			\fill[rounded corners,gray,opacity=0.5] (4.8, 3.75) rectangle (7.2, 4.25) {};
			\fill[rounded corners,gray,opacity=0.5] (8.8, 1.75) rectangle (10.2, 2.25) {};
			\fill[rounded corners,gray,opacity=0.5] (9.8, 0.75) rectangle (11.2, 1.25) {};
			
			\fill[rounded corners,gray,opacity=0.25] (1.7,0.75) rectangle (7.3,5.25) {};
			\fill[rounded corners,gray,opacity=0.25] (8.7,0.75) rectangle (11.4,5.25) {};

			\node () at (1!v2)[singleton]{};
			\node () at (1!v5)[singleton]{};
			\node () at (6!v1)[singleton]{};
			\node () at (8!v3)[singleton]{};
			\node () at (9!v5)[singleton]{};
		\end{tikzpicture}
		\caption{
			Illustration of the algorithm for $\ell = 7$ and $k=2$ in the proof of \Cref{thm:sum-fpt-n+l}:
			There are two solution patterns with the $\ell'$-patterns highlighted in light gray and the intervals are highlighted in dark gray.
			The solution is completed with intervals of length zero, which are represented by diamonds.
		}
		\label{fig:sum-fpt-n+l}
	\end{figure}
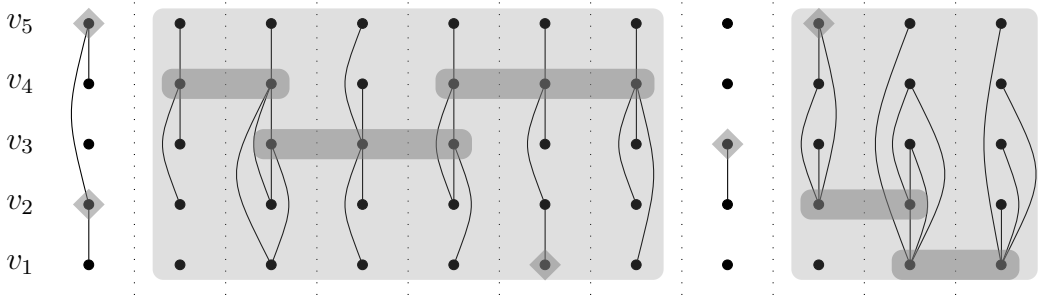
	For an instance~$(\G=(V,(E_i)_{i\in[\tau]}),k,\ell)$, we say that an \emph{$\ell'$-pattern} is a 
	sequence $(F_0,\ldots,F_{\ell'})$ of edge sets where $F_i\subseteq \binom{V}{2}$ for all $i \in [\ell']$.
	A \emph{solution pattern} is a pair $(P,\mathcal{I})$ where~$P$ is an~$\ell'$-pattern with $1 \leq \ell' \leq \ell$ and $\mathcal{I} \subseteq V \times \{0,\ldots,\ell'\} \times \{0,\ldots,\ell'\}$ with $\mathcal{I} \neq \emptyset$ and $a < b$ for all~$(v,a,b) \in \mathcal{I}$.
	Let $\mathcal{P}$ denote the set of all solution patterns.
	Note that there are at most $2^{\binom{n}{2}(\ell+1)}\cdot\ell$ different $\ell'$-patterns with $1\le \ell' \leq \ell$.
	This yields $|\mathcal{P}| \leq 2^{\binom{n}{2}(\ell+1)+n(\ell+1)^2}\cdot \ell \in 2^{O(n^2\ell^2)}$.
	The \emph{weight} of a solution pattern $(P,\mathcal{I})$ is $w((P,\mathcal{I})) \coloneqq \sum_{(v,a,b) \in \mathcal{I}} (b -a)$.
  
	Our algorithm first checks all possibilities of how often a solution ``matches'' each solution pattern~$(P,\mathcal{I})$, that is, it uses intervals as specified by~$\mathcal{I}$ to cover a subsequence of layers of~$\mathcal G$ that is equal to~$P$ (a formal explanation follows below).
	Since~$\mathcal I$ contains an interval with positive length, we can choose at most $\ell$ such solution patterns.
	Hence, we iterate over all functions~$f\colon \mathcal{P} \rightarrow \{0,\ldots,\ell\}$ with $\sum_{x \in \mathcal{P}} f(x) w(x) \leq \ell$ and which have the property that $\sum_{(P,\mathcal I)\in\mathcal P}f((P,\mathcal I))|\{ (v,a,b) \mid (v,a,b) \in \mathcal I\}|\leq k$ for all $v\in V$.
	There are at most $(\ell+1)^{|\mathcal{P}|}$ such functions.
	Next, we try out every order in which the solution patterns are matched by a solution.
	The result is a sequence $(P_1,\mathcal{I}_1),\ldots,(P_r,\mathcal{I}_r)$ of $r\le \ell$ solution patterns.
	The arguments above imply that we check at most $(\ell+1)^{2^{O(n^2\ell^2)}} \cdot \ell !$ such sequences.
	
	Next, we check whether $(P_1,\mathcal{I}_1),\ldots,(P_r,\mathcal{I}_r)$ can be matched by a solution and, if it can, compute the temporal graph that remains to be covered once we implement the solution patterns.
	The above sequence can be matched by a solution if there are $i_1 < j_1 < i_2 < j_2 < \dots < i_r < j_r \in [\tau]$ such that
	$P_s = (E_{i_s},\ldots,E_{j_s})$ for all $s \in [r]$.
	Algorithmically, we simply determine $i_s$ and $j_s$ by finding the earliest occurrence of $P_s$ in $E_1,\ldots,E_\tau$ with~$i_s \geq j_{s-1} +1$.

        Next, we compute the remaining temporal graph obtained by implementing the solution patterns $(P_1,\mathcal{I}_1),\ldots,(P_r,\mathcal{I}_r)$ on $i_1,\ldots,i_r,j_1,\ldots,j_r \in [\tau]$.
	To this end, for each $s\in[r]$ and each $t\in[i_s,j_s]$, let
	$E'_t \coloneqq
	E_t \setminus \inc\{ v\in V \mid (v,a,b) \in \mathcal{I}_s,\; i_s+a\le t \le i_s + b\}$.
	We call $\G' \coloneqq (V, E'_1,\ldots,E'_\tau)$ the \emph{residual} temporal graph.
	For each $v\in V$, let $k_v \coloneqq k - \sum_{s=1}^r|\{ (v,a,b) \mid (v,a,b) \in \mathcal{I}_s\}|$.
        
	We now build the instance~$(\G',(k_v)_{v\in V},0)$ of \textsc{Nonuniform \MTsum}.
	By \Cref{thm:ell=0FPTn}, solving this instance is \FPT when parameterized by~$n$.
        Our algorithm returns~$\texttt{true}$ if $(\G',(k_v)_{v\in V},0)$ is a yes-instance for at least one sequence $(P_1,\mathcal{I}_1),\ldots,(P_r,\mathcal{I}_r)$.
        
	It remains to show that the algorithm we described is correct.
	If the algorithm returns \texttt{true}, then this result is clearly correct.
        Assume that, for $(P_1,\mathcal{I}_1),\ldots,(P_r,\mathcal{I}_r)$, the instance~$(\G',(k_v)_{v\in V},0)$ has a solution~$\vc_0$.
        Let $\vc_{>0} \coloneqq \bigcup_{s=1}^r\{ (v,i_s+a,i_s+b) \mid (v,a,b) \in \mathcal{I}_{s}\}$.
	Then, by construction, it is easy to verify that $\vc \coloneqq \vc_0 \cup \vc_{>0}$ is a solution for $(\G,k,\ell)$.
	
	Conversely, suppose that $\vc$ is a solution for $(\G,k,\ell)$.
	Let $\vc_{>0} \coloneqq \{(v,a,b) \in \vc \mid a<b\}$.
	Consider the interval multiset $\mathcal{I} \coloneqq \{[a,b] \subseteq [\tau] \mid \exists v \in V\colon (v,a,b) \in \vc_{>0}\}$ and the corresponding interval graph $G_{\mathcal{I}}\coloneqq (\mathcal{I},E_{\mathcal{I}})$, where $E_{\mathcal{I}} \coloneqq \{\{[a,b],[a',b']\} \in \binom{\mathcal{I}}{2} \mid [a,b] \cap [a',b'] \neq \emptyset\}$.
	Since $\sum_{[a,b]\in\mathcal{I}} (b-a) \leq \ell$,
	it follows that $G_{\mathcal{I}}$ contains at most $\ell$ connected components and each of these components covers at most $\ell$ time steps.
	We will describe a solution pattern corresponding to each connected component of $G_{\mathcal{I}}$.
	For a connected component $C$ of $G_{\mathcal{I}}$, let $i \coloneqq \min_{[a,b] \in C} a$ and $j \coloneqq \max_{[a,b] \in C} b$.
	Note that $j-i + 1 \leq \ell$.
	Consider the $(j-i + 1)$-pattern $P = (E_i,\ldots,E_j)$.
	Let $\mathcal{J} \coloneqq \{(v,a-i,b-i)  \mid (v,a,b) \in \vc_{>0},\; [a,b]\in C\}$.
	Then, the solution pattern corresponding to $C$ is $(P,\mathcal{J})$.
	By listing the solution pattern for each connected component of $G_{\mathcal{I}}$ in the order in which the components appear in the interval graph, we get the solution patterns $(P_1,\mathcal{I}_1),\ldots,(P_r,\mathcal{I}_r)$ where $r$ is the number of components of $G_{\mathcal{I}}$.
	
	We claim that our algorithm returns \texttt{true} for this particular sequence of solution patterns.
	Note that our algorithm may choose different indices $i_1,\ldots,i_r,j_1,\ldots,j_r$
        for implementing $(P_1,\mathcal{I}_1),\ldots,(P_r,\mathcal{I}_r)$.
	However, the resulting residual temporal graph~$\G'$ is the same as the residual temporal graph derived from the connected components of~$G_\mathcal{I}$ up to a permutation of the layers.
	Hence, by \Cref{obs:permutation},~$\G'$ is also a yes-instance for \textsc{Nonuniform \MTsum} with $\ell = 0$.
\end{proof}

\section{Parameterizing \MTmax{} by~$n+\ell$}
\label{sec:max-n-l}

In this section, we prove that fixed-parameter tractability of \MTmax parameterized by~$n+\ell$ is (in contrast to \MTsum) unlikely.

\begin{theorem}
	\label{thm:w1-max-n}
  \MTmax parameterized by~$n$ is \Wone-hard for $\ell = 1$.
\end{theorem}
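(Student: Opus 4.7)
The plan is to reduce from \textsc{Multicolored Clique}, a \Wone-hard problem parameterized by the number of colors~$\kappa$. Given an instance consisting of a graph $G$ with color classes $V_1,\dots,V_\kappa$ of common size $N$ (which we may assume even), I would produce an \MTmax instance with $n \in O(\kappa^2)$ vertices and $\ell=1$ that is a yes-instance if and only if $G$ has a multicolored $\kappa$-clique. This suffices because $n$ is bounded by a computable function of the parameter $\kappa$.

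For each color $i \in [\kappa]$, I introduce a ``selector'' vertex $x_i$, and for each unordered pair $\{i,j\} \subseteq [\kappa]$ an auxiliary vertex $y_{i,j}$; this yields $n = \kappa + \binom{\kappa}{2}$. The time axis is organised into \emph{selection blocks}, one per color, and \emph{verification blocks}, one per pair of colors. A selection block $B_i$ consists of $N$ consecutive layers, each carrying an edge of the form $\{x_i, y_{i,j}\}$ for an appropriate $j$, so that $x_i$ or the auxiliary vertex must cover it. By tuning the interval bound~$k$ so that $x_i$ can afford exactly $N/2$ intervals in $B_i$, the vertex $x_i$ is forced to cover all but exactly one of the $N$ layers; this ``gap'' encodes the choice of a vertex $v_p^{(i)} \in V_i$. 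The parity trick is that for $N$ even and $\ell=1$, leaving out any single position costs exactly $N/2$ length-$\le 1$ intervals regardless of the position, so no choice is favoured.

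The clique constraint is enforced in the verification blocks. For each pair $\{i,j\}$ and each non-edge $\{v_p^{(i)}, v_q^{(j)}\} \notin E(G)$, a dedicated layer contains an edge whose only cover options are $x_i$ and $x_j$; this layer is placed in the verification block so that $x_i$ (resp.\ $x_j$) is active at that layer if and only if its gap is not at $p$ (resp.\ $q$). In this way, the only forbidden configuration is the one representing a non-edge. Consistency between $x_i$'s selection in $B_i$ and its (implicit) selection in each verification block involving $i$ is maintained via $y_{i,j}$: since $\ell=1$, $y_{i,j}$ can carry the location of the gap to the relevant verification layers only through short, locally-linked intervals, and a tight budget of intervals for $y_{i,j}$ penalises any inconsistency.

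The main obstacle is the design of this consistency gadget. Because $\ell=1$ forces every interval to span at most two consecutive time steps, ``remembering'' the gap of $x_i$ across blocks has to be implemented through purely local length-$\le 1$ interactions between $x_i$ and the $y_{i,j}$'s, without exceeding the tight per-vertex interval budget. My plan is to exploit the exact $N/2$ bound per selection block (arising from the parity of $N$) together with a coupling of $x_i$ and $y_{i,j}$ admitting only the ``agreeing'' configuration, so that any deviation forces either an uncovered edge or a vertex beyond its interval budget. Closing this tight-budget calculation in a clean gadget is the delicate step of the reduction.
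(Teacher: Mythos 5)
Your proposal has a genuine gap, and it is exactly the step you flag yourself: the consistency gadget does not exist in the form you describe, and it is not a routine detail but the whole difficulty of the theorem. With $\ell=1$ every interval spans at most two consecutive layers, so a vertex's activity at two far-apart times is coupled \emph{only} through its total interval budget --- that is, only cardinality information survives across time. Your reduction from \textsc{Multicolored Clique} needs to transmit an \emph{identity} (which of the $N$ vertices of $V_i$ was selected) from the selection block $B_i$ to $\Theta(\kappa)$ verification blocks. Chaining short $y_{i,j}$-intervals to ``carry'' the gap location costs budget proportional to the temporal distance travelled, that distance varies with where the relevant non-edge layers sit, and nothing ties the gap position in $B_i$ to which verification layers $x_i$ covers: the solver may freely place $x_i$'s remaining intervals at verification layers corresponding to a different selection. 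Moreover, your budget claim inside $B_i$ is incorrect: for $N$ even, covering \emph{all} $N$ consecutive layers costs $\lceil N/2\rceil = N/2$ intervals, exactly the same as covering all but one layer (indeed $\lceil (p-1)/2\rceil + \lceil (N-p)/2\rceil = N/2$ for every gap position $p$), so a budget of $N/2$ forces no gap at all --- and leaving two adjacent gaps even \emph{saves} an interval, so the solution structure is not controlled. Separately, you assign different budgets to the $x_i$'s and $y_{i,j}$'s, but \MTmax has a single uniform $k$; this would need an extra equalization step (fixable by padding gadgets, but unaddressed).

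The paper avoids the identity-transmission problem entirely by choosing a source problem whose constraints are purely counting constraints: it reduces from \textsc{Unary Bin Packing} parameterized by the number $\beta$ of bins, via a \textsc{Multicolored} and then a \textsc{Nonuniform} intermediate version of \MTmax. There, a clique on the bin-vertices $u_1,\ldots,u_\beta$ in every item layer forces all but one of them to be active, and an exact global count ($\beta(S-B) = S(\beta-1)$ intervals across $2S$ layers) rigidifies the solution into alternating length-one patterns, so that per-vertex budgets directly encode bin loads $\sum_{i \in f^{-1}(j)} s_i \le B$ --- no information other than counts ever has to travel through time. This match between what $\ell=1$ budgets can express and what bin packing asks is the key idea your sketch is missing; to salvage your approach you would either have to invent the long-range consistency gadget (which the above locality obstruction suggests is unlikely to exist) or switch to a counting-based source problem, which is essentially the paper's route.
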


The key difference to the case $\ell = 0$ (which is in \FPT (\cref{thm:ell=0FPTn})) is that, for $\ell=1$, intervals can overlap and hence, the temporal ordering of the layers is relevant.
We prove \Cref{thm:w1-max-n} in three steps by first showing
that two generalizations of \MTmax with $\ell = 1$ are \Wone-hard when parameterized by~$n$.
The key step is to show \Wone-hardness for the following ``multicolored'' version of \MTmax.

\problemdef{Multicolored \MTmax}
{A temporal graph $\G=(V,E_1,\ldots,E_\tau)$ where $V=V_1\uplus\dots\uplus V_r$, $\ell\in\mathbb{N}_0$, and $k_1,\ldots,k_r \in \mathbb{N}$.}
{Is there an activity timeline~$\vc$ covering~$\G$ such that $\max_{(v,a,b)\in\vc}(b-a)\le \ell$ and $|\{(v,a,b) \in \vc \mid v \in V_i\}| \leq k_i$ for all $i \in [r]$?}

The proof is by reduction from \textsc{Unary Bin Packing}, where
we are given~$m$ items with sizes $s_1,\ldots,s_m\in\mathbb{N}$, a number $\beta\in \mathbb{N}$ of bins, and a bin size $B \in \mathbb{N}$ with all integers encoded in unary.
We are asked to decide if there is an assignment $f\colon [m] \rightarrow [\beta]$ of items to bins such that $\sum_{i \in f^{-1}(b)} s_i \leq B$ for all $b \in [\beta]$.
This problem is \Wone-hard when parameterized by~$\beta$~\cite{JKMS13}.
The general idea behind the reduction is that bins can be represented by vertices and the bin size by the color budgets $k_i$.

\begin{lemma}
	\label{lem:w1-mcmax-n}
	\textsc{Multicolored \MTmax} parameterized by~$n$ is \Wone-hard for $\ell = 1$.
\end{lemma}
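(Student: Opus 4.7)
The plan is to give a parameterized reduction from \textsc{Unary Bin Packing}, which is \Wone-hard with respect to the number~$\beta$ of bins. Given an input with items of sizes $s_1,\ldots,s_m$ (in unary), bin capacity~$B$, and $\beta$ bins, I will construct in polynomial time an equivalent \textsc{Multicolored \MTmax} instance with $\ell=1$ whose vertex set has size bounded by a function of~$\beta$, so that the target parameter~$n$ is bounded in terms of the source parameter.

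Following the intuition outlined in the paper, I first introduce one \emph{bin vertex} $b_j$ per bin, each forming its own singleton color class $V_j=\{b_j\}$ with budget $k_j:=B$. To this I add a small set of \emph{helper vertices} whose number depends only on~$\beta$, placed in additional color classes with carefully tuned budgets. Reusing these helpers across all item gadgets is what keeps $n$ bounded in $\beta$ rather than in $m$.

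The temporal graph is built by concatenating, for each item $i$, an \emph{item block} of roughly $2s_i$ consecutive layers, separated by short \emph{buffer} segments. The edges inside an item block will be designed so that any valid cover of the block uses exactly $s_i$ length-$1$ intervals from a \emph{single} bin vertex $b_j$, encoding that item $i$ is assigned to bin~$j$. The buffer segments, combined with the bound $\ell=1$ (which prevents any single interval from spanning more than two consecutive layers), ensure that intervals inside one item block cannot cover a different block, so per-block choices of bin are independent. Under this encoding the per-bin budget $k_j=B$ directly mirrors the capacity constraint $\sum_{i\in f^{-1}(j)}s_i\le B$, and an activity timeline exists in the constructed instance if and only if a valid bin-packing assignment $f\colon[m]\to[\beta]$ exists.

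The main obstacle is the \emph{atomicity gadget}: forcing each item block to be covered by intervals of a single bin rather than by several cooperating bins, using only helpers whose count depends on~$\beta$ alone. I expect this step to exploit the $\ell=1$ bound in an essential way, for instance by placing edges of the form $\{b_j,h\}$ at specific positions within a block and tuning helper budgets so that the simultaneous use of intervals from two distinct bins inside a block would force some helper color class to exceed its budget. Once atomicity is in place, both directions of the equivalence reduce to a routine count of intervals against budgets.
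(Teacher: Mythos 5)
Your skeleton coincides with the paper's (a parameterized reduction from \textsc{Unary Bin Packing} with $\beta$ bins, blocks of $2s_i$ consecutive layers per item, and $\ell=1$ limiting every interval to two layers), but the proposal leaves unconstructed precisely the step that carries the entire proof: the atomicity gadget. Worse, the \emph{direct} encoding you start from ($V_j=\{b_j\}$ with budget $k_j:=B$, where $b_j$ actively covers the blocks of the items assigned to bin $j$) is structurally unable to support atomicity by budget counting on the bin vertices: if covering a block of item $i$ with a single bin costs $s_i$ intervals of $b_j$, then splitting that block between $b_j$ and $b_{j'}$ costs $s_i'+s_i''$ intervals with $s_i'+s_i''\approx s_i$, so the constraints $k_j=B$ only enforce a \emph{fractional} packing, which (after the necessary normalization $\sum_i s_i=\beta B$, which you also omit) is always feasible. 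Your proposed fix---edges $\{b_j,h\}$ with tuned helper budgets---does not obviously repair this: any layer containing such an edge can instead be covered by activating $h$, so ruling out helper-based coverage and mixed-bin coverage simultaneously requires a tight \emph{global} counting argument over all $m$ blocks (and $m$ is unbounded in $\beta$), which is exactly what is missing from the sketch.

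The paper resolves this with a complement encoding plus two tightness arguments that your proposal has no analogue of. Bin vertices $u_1,\dots,u_\beta$ get budgets $k_j:=S-B$ and form a clique in every item layer, so each of the $2S$ layers forces all but one $u_j$ to be active; with $\ell=1$ this needs at least $S(\beta-1)$ intervals, and the budgets sum to exactly $\beta(S-B)=S(\beta-1)$, forcing perfect alignment (length-$1$ intervals at even offsets, exactly one inactive $u_j$ per layer). Atomicity---the \emph{same} bin being inactive throughout a block---is then enforced by a second family $v_1,\dots,v_\beta$ sharing one color class with tight budget $k_{\beta+1}:=S-m$ and edges $\{u_j,v_j\}$ in the $2s_i-2$ interior layers of each block: switching the inactive bin mid-block would require two overlapping $v$-intervals at the same odd offset, exceeding that shared budget. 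The assignment $f$ is read off from which $u_j$ is inactive, and the capacity bound follows because $u_j$'s activity budget $S-B$ caps $\sum_{i\in f^{-1}(j)}s_i$ at $B$. In short: you correctly identified the obstacle, but offered a hope rather than a construction, and the encoding you chose makes the obstacle harder, not easier; the proposal as it stands is not a proof.
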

\begin{proof}
	\begin{figure}[t]
		\centering
		\begin{tikzpicture}[xscale=0.75,yscale=0.5]
		\def\nsc{0.33}
		\tikzstyle{xnode}=[circle,scale=\nsc,draw,fill=black];
		
		\node () at (0.25,1) {$v_3$};
		\node () at (0.25,2) {$v_2$};
		\node () at (0.25,3) {$v_1$};
		\node () at (0.25,4.5) {$u_1$};
		\node () at (0.25,5.5) {$u_2$};
		\node () at (0.25,6.5) {$u_3$};
		
		\foreach \x in {1,...,17} {
			\draw[loosely dotted] (\x+0.5,0.5) -- (\x+0.5,7);
		}
		
		\foreach \x in {1,...,18} {
			\node (\x!v3) at (\x,1)[xnode]{};
			\node (\x!v2) at (\x,2)[xnode]{};
			\node (\x!v1) at (\x,3)[xnode]{};
			\node (\x!u1) at (\x,4.5)[xnode]{};
			\node (\x!u2) at (\x,5.5)[xnode]{};
			\node (\x!u3) at (\x,6.5)[xnode]{};
			
			\draw (\x!u1) -- (\x!u2);
			\draw (\x!u2) -- (\x!u3);
			\draw (\x!u1) .. controls (\x-0.25,5.5) .. (\x!u3);
		}
		\foreach \x in {2,3,6,7,8,9,14,15,16,17} {
			\draw (\x!u1) -- (\x!v1);
			\draw (\x!u2) .. controls (\x+0.35,3.75) .. (\x!v2);
			\draw (\x!u3) .. controls (\x+0.55,3.75) .. (\x!v3);
		}
		
		\draw [
		thick,
		decoration={
			brace,
			mirror,
			raise=0.4cm,
			amplitude=0.2cm
		},
		decorate
		] (0.6,1) -- (4.4,1) 
		node [pos=0.5,anchor=north,yshift=-0.55cm] {item $1$};
		\draw [
		thick,
		decoration={
			brace,
			mirror,
			raise=0.4cm,
			amplitude=0.2cm
		},
		decorate
		] (4.6,1) -- (10.4,1) 
		node [pos=0.5,anchor=north,yshift=-0.55cm] {item $2$};
		\draw [
		thick,
		decoration={
			brace,
			mirror,
			raise=0.4cm,
			amplitude=0.2cm
		},
		decorate
		] (10.6,1) -- (12.4,1) 
		node [pos=0.5,anchor=north,yshift=-0.55cm] {item $3$};
		\draw [
		thick,
		decoration={
			brace,
			mirror,
			raise=0.4cm,
			amplitude=0.2cm
		},
		decorate
		] (12.6,1) -- (18.4,1) 
		node [pos=0.5,anchor=north,yshift=-0.55cm] {item $4$};
		
		\foreach \x in {5,7,9,13,15,17} {
			\fill[rounded corners,gray,opacity=0.5] (\x-0.2, 4.25) rectangle (\x+1.2, 4.75) {};
		}
		
		\foreach \x in {1,3,11,13,15,17} {
			\fill[rounded corners,gray,opacity=0.5] (\x-0.2, 5.25) rectangle (\x+1.2, 5.75) {};
		}
		
		\foreach \x in {1,3,5,7,9,11} {
			\fill[rounded corners,gray,opacity=0.5] (\x-0.2, 6.25) rectangle (\x+1.2, 6.75) {};
		}
		\foreach \x in {2} {
			\fill[rounded corners,gray,opacity=0.5] (\x-0.2, 2.75) rectangle (\x+1.2, 3.25) {};
		}
		\foreach \x in {6,8} {
			\fill[rounded corners,gray,opacity=0.5] (\x-0.2, 1.75) rectangle (\x+1.2, 2.25) {};
		}
		\foreach \x in {14,16} {
			\fill[rounded corners,gray,opacity=0.5] (\x-0.2, 0.75) rectangle (\x+1.2, 1.25) {};
		}
		\end{tikzpicture}
		\caption{Output instance of the reduction for a \textsc{Unary Bin Packing} instance consisting of four items of sizes $2$, $3$, $1$, and $3$ with $\beta =3$ and $B=3$.
			The activity timeline indicated in the figure corresponds to an assignment that adds items $1$ and $3$ to the first bin, item $2$ to the second bin, and item $4$ to the third bin.}
		\label{fig:binpacking}
	\end{figure}
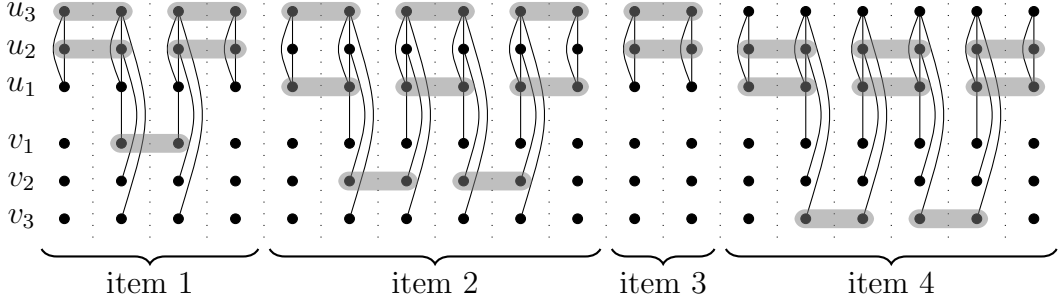
	We reduce from \textsc{Unary Bin Packing} and assume that in the input instance $(s_1,\ldots,s_m,\beta,B)$, we have $S \coloneqq \sum_{i=1}^{m} s_i = \beta B$.
	If $S > \beta B$, then this is clearly a no-instance.
	If $S < \beta B$, then we can add $\beta B - S$ items of size~$1$.

	We construct an instance $(\G=(V=V_1\uplus\dots\uplus V_r,(E_t)_{t\in [\tau]}),\ell=1,(k_i)_{i\in [r]})$ for \textsc{Multicolored \MTmax} (see \Cref{fig:binpacking}.).
	Let $V \coloneqq \{u_1,\ldots,u_\beta,v_1,\ldots,v_\beta\}$, $r \coloneqq \beta +1$, $V_i\coloneqq \{u_i\}$ for all $i\in[\beta]$, and $V_{\beta+1} \coloneqq \{v_1,\ldots,v_\beta\}$.
	We set $k_i \coloneqq S - B$ for $i\in [\beta]$ and~$k_{\beta+1} \coloneqq S - m$.
	The number of layers is $\tau \coloneqq 2 S$.
	The first $2s_1$ layers represent item $1$, followed by $2s_2$ layers representing item $2$, and so on.
	Specifically, if $t,\ldots, t+2s_i - 1$ are the layers representing item $i$, then
	\begin{align*}
			E_t &\coloneqq E_{t+2s_i - 1} \coloneqq \{ \{u_j,u_{j'}\} \mid j,j' \in [\beta], j\neq j'\} \text{ and}\\
			E_{t+a}  & \coloneqq \{ \{u_j,u_{j'}\} \mid j,j' \in [\beta], j\neq j'\}
			\cup \{\{u_j,v_j\} \mid j \in [\beta]\}
	\end{align*}
	for $a \in [2s_i -2]$.
	Clearly, this instance can be computed in polynomial time and $|V|=2\beta$.
	It remains to show that the instance for \textsc{Unary Bin Packing} is a yes-instance if and only if this output instance is a yes-instance for \textsc{Multicolored \MTmax}.
	
	Let $f$ be an assignment of items to bins such that $\sum_{i \in f^{-1}(b)} s_i = B$ for all $b \in [\beta]$.
	We will give an activity timeline $\vc$ that covers $\G$.
	For any item $i$, let $t,\ldots,t+2s_i -1$ be the~$2s_i$ layers representing this item and let $b_i \coloneqq f(i)$ be its bin.
	Then, we add the following intervals to $\vc$:
	\begin{align*}
		&\{(u_j,t+2a,t+2a+1) \mid j \in [\beta], j \neq b_i, a \in \{0,\ldots,s_i-1\}\} \cup\\
		&\{(v_{b_i},t+2a+1,t+2a+2) \mid a \in \{0,\ldots,s_i-2\}\}.
	\end{align*}
	Then, any edge $\{u_j,u_{j'}\}$ is covered because either $j\neq b_i$ or $j'\neq b_i$.
	Also, the edge $\{u_j,v_j\}$ is covered by $u_j$ for $j\neq b_i$ and by $v_{b_i}$ for $j=b_i$.
	Then, for $j \in [\beta]$, $j\neq b_i$, there are $s_i$ intervals that contain $u_j$ and no interval contains~$u_{b_i}$.
	Hence, for every $j \in [\beta]$, the vertex~$u_j$ is contained in
	\begin{align*}
		\sum_{\substack{i \in [m] \\ b_i \neq j}} s_i = S - \sum_{\substack{i \in [m] \\ b_i = j}} s_i = S - B = k_j
	\end{align*}
	intervals.
	The vertex $v_j$ is contained in
	\begin{align*}
		\sum_{\substack{i \in [m] \\ b_i = j}} (s_i -1) = B - |f^{-1}(j)|
	\end{align*}
	intervals.
	Thus, the total number of intervals containing a vertex in $V_{\beta + 1}$ is
	\begin{align*}
		\sum_{j \in [\beta]} B - |f^{-1}(j)| = \beta B - m = S -m = k_{\beta+1}.
	\end{align*}
	
	Now assume that $\vc$ is an activity timeline that covers $\G$.
	Since $\{u_1,\ldots,u_\beta\}$ induce a clique in every layer that represents an item, $\vc$ must contain all but one of these vertices in every such layer.
	There are a total of $2S$ such layers.
	Since each interval can cover a vertex in at most~$\ell+1=2$ layers, this requires $S(\beta - 1)$ intervals.
	Since only $k_j= S - B$ intervals containing a vertex $u_j$ may be chosen, the total number of intervals containing any of the vertices $u_1,\ldots,u_\beta$ is at most $\beta (S-B) = \beta S - S = S(\beta -1)$ intervals.
	Hence, in each layer that represents an item, exactly $\beta - 1$ of the vertices~$u_1,\ldots,u_\beta$ are in~$\vc$.
	Therefore, $\vc$ contains only intervals of the form $(u_b,t+a,t+a+1)$ where $t$ is the first layer corresponding to a particular item and $a$ is even.
	
	Now consider the vertices $v_1,\ldots,v_\beta$.
	Since one vertex $u_j$ is not contained in $\vc$ for every layer representing an item, the vertex $v_j$ must be in $\vc$, unless it is the first or the final layer representing that item.
	There are $\sum_{i \in [m]} (2s_i -2) = 2(S - m)$ such layers.
	Since each interval can cover at most two layers, this requires $S-m = k_{\beta + 1}$ intervals.
	Hence, none of the intervals that use any~$v_j$ can overlap.
	Therefore, $\vc$ contains intervals of the form~$(v_j,t+a,t+a+1)$ where $t$ is the first layer corresponding to a particular item and~$a$ is odd.
	
        Now, consider an item~$i\in[m]$ and the representing layers~$t,\ldots,t+2s_i-1$.
        If~$s_i=1$, then there clearly exists one~$j\in[\beta]$ such that~$(u_j,t,t+1)$ is not contained in~$\vc$.
        If~$s_i > 1$ and $\vc$ does not contain $(u_j,t+2a,t+2a+1)$ and $(u_{j'},t+2a+2,t+2a+3)$ with $j \neq j'$ and $0\le a\le s_i-2$,
	then $\vc$ must contain $(v_j,t+2a+1,t+2a+2)$ and $(v_{j'},t+2a+1,t+2a+2)$, contradicting our previous observation.
	Hence, for every item $i \in [m]$, there exists exactly one $j \in[\beta]$ such that $(u_j,t+2a,t+2a+1)$ is not in~$\vc$ for all $0 \leq a \leq s_i-1$.
	We will call this bin $b_i$.
	This yields the assignment $f(i) \coloneqq b_i$ for all $i\in [m]$.
	
	Suppose that $\sum_{i \in f^{-1}(j)} s_i > B$ for any $j \in[\beta]$.
	Since $\sum_{j\in [\beta]}\sum_{i \in f^{-1}(j)} s_i = \sum_{i\in[m]} s_i = \beta B$, this implies that $\sum_{i \in f^{-1}(j')} s_i < B$ for some $j' \in [\beta]$.
	Then, $u_{j'}$ is contained in \[\sum_{\substack{i\in[m]\\f(i)\neq j}}s_i > S - B\] intervals in~$\vc$, which is not possible.
	Hence, $f$ is an assignment of items to bins that satisfies $\sum_{i \in f^{-1}(j)} s_i = B$ for all $j\in [\beta]$.
\end{proof}

Recall the definition of \textsc{Nonuniform \MTmax} given in \Cref{sec:sum-n-l}.
Using \Cref{lem:w1-mcmax-n}, we can now show that it is \Wone-hard when parameterized by $n$ for $\ell=1$.

\begin{lemma}
	\label{lem:w1-nonuniform}
	\textsc{Nonuniform \MTmax} parameterized by~$n$ is \Wone-hard for $\ell = 1$.
\end{lemma}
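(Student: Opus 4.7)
The plan is to reduce from \textsc{Multicolored \MTmax} with $\ell=1$, which is \Wone-hard parameterized by $n$ by \Cref{lem:w1-mcmax-n}. The central task is to encode a shared per-color-class budget $k_i$ on a class $V_i=\{v_{i,1},\ldots,v_{i,s}\}$ using only per-vertex budgets, without increasing $n$ by more than a constant factor. A naive assignment $k_v:=k_i$ for each $v\in V_i$ blows the total $V_i$-budget up to $s\cdot k_i$, so it admits spurious Nonuniform solutions that do not correspond to Multicolored ones.

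The hard Multicolored instances constructed in the proof of \Cref{lem:w1-mcmax-n} already have the convenient property that all but one color class is a singleton: the only non-singleton class is $V_{\beta+1}=\{v_1,\ldots,v_\beta\}$ with shared budget $S-m$. It therefore suffices to design a gadget for a single non-singleton class. I would replace this class by a single new vertex $z$ carrying an individual budget $k_z$ (to be chosen), and restructure the temporal graph to preserve the distinguishing role of the original $v_j$. Concretely, each original layer $t$ is expanded into a block of $2\beta-1$ sub-layers $t.1,\mathrm{sp},t.2,\mathrm{sp},\ldots,t.\beta$, where each spacer $\mathrm{sp}$ carries no $z$-edge: original edges $\{u_j,v_j\}$ at layer $t$ become $\{u_j,z\}$ placed only at sub-layer $t.j$, while the clique edges $\{u_j,u_{j'}\}$ are replicated over the whole block. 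The budgets $k_{u_j}$ are rescaled by the layer-expansion factor to preserve the tight $u_j$-accounting used in \Cref{lem:w1-mcmax-n}.

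Because $\ell=1$, an interval at $z$ covers at most two consecutive sub-layers, and by construction any such pair contains at most one ``useful'' active sub-layer $t.j$ (the other being a spacer or an unrelated block). Hence each interval at $z$ is unambiguously charged to one original vertex identity $v_j$, and setting $k_z$ appropriately will align the Nonuniform budget with the Multicolored shared budget. The forward direction is straightforward: translate each Multicolored $v_j$-interval into one or more $z$-intervals at the corresponding sub-layers, and pad each $u_j$-interval out into its expanded block.

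The main obstacle lies in the backward direction and in pinning down the precise value of $k_z$: an original $v_j$-interval spanning two consecutive layers $(t,t+1)$ corresponds to sub-layer activations $t.j$ and $(t+1).j$ that are \emph{not} consecutive in the expanded graph (because the block for $t$ ends and the block for $t+1$ begins), so it requires two $z$-intervals rather than one, forcing the budget to be scaled accordingly. The hardest step will be ruling out ``fractional'' Nonuniform solutions in which $z$ alternately plays the role of different $v_j$ within a single item's middle layers: I expect to argue this by combining the tight layer-activation count at $z$ with the tight layer-activation count at the $u_j$ (both scaled), so that any feasible Nonuniform solution decomposes uniquely into per-identity blocks and hence lifts to a valid Multicolored solution preserving the shared budget.
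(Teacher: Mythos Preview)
Your reduction is from the right source (\textsc{Multicolored \MTmax}), but you dismissed the ``naive'' assignment $k'_v:=k_i$ too quickly and then committed to a much more delicate construction than necessary. The paper's proof keeps the vertex set unchanged, sets $k'_v:=k_i$ for every $v\in V_i$, and simply \emph{appends} $2k_i$ fresh layers in which $V_i$ induces a clique (one block per color class). Covering a clique on $|V_i|$ vertices over $2k_i$ layers with length-$1$ intervals burns exactly $(|V_i|-1)k_i$ intervals on $V_i$, so the remaining budget available to $V_i$ on the original $\tau$ layers is precisely $|V_i|k_i-(|V_i|-1)k_i=k_i$, recovering the shared color budget. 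No vertex compression, no layer expansion, no rescaling; $n$ does not change at all. The whole argument is three paragraphs.

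Your route---merging $V_{\beta+1}$ into a single vertex $z$ and expanding each layer into a block of sub-layers---might be salvageable, but as stated it has concrete gaps beyond the one you flag. First, with blocks of length $2\beta-1$ there is no spacer between the last sub-layer $t.\beta$ of one block and the first sub-layer $(t{+}1).1$ of the next, so a single $z$-interval can cover two ``useful'' sub-layers for \emph{different} indices $j$, breaking your charging argument. Second, $2\beta-1$ is odd, so length-$1$ intervals on the $u_j$ cannot tile a block; some $u_j$-interval must straddle a block boundary, and the tight $u_j$ count from \Cref{lem:w1-mcmax-n} no longer transfers under a simple multiplicative rescaling. Both are fixable (e.g.\ even-length blocks with spacers at both ends), but the resulting bookkeeping is exactly what the paper's padding trick avoids.
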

\begin{proof}
	The proof is by reduction from \textsc{Multicolored \MTmax}.
	Let $(\G=(V=V_1\uplus\dots\uplus V_r,(E_t)_{t\in [\tau]}),\ell=1,(k_i)_{i\in [r]})$ be an instance of \textsc{Multicolored \MTmax}.

	We construct an instance $(\G'=(V,(E'_t)_{t\in [\tau']}),\ell=1,(k'_v)_{v\in V})$ of \textsc{Nonuniform \MTmax} as follows.
	For every $i\in[r]$ and every $v \in V_i$, we set $k'_v \coloneqq k_i$.
	Let $\tau' \coloneqq \tau + 2\sum_{i = 1}^r k_i$.
	For $t \in [\tau]$, we set $E'_t \coloneqq E_t$.
	It remains to define the additional layers $\tau+1,\ldots,\tau'$.
	For every $i \in [r]$, we add $2k_i$ layers $t_i \coloneqq \tau + 1 +2\sum_{j=1}^{i-1}k_j,\ldots,t_i + 2k_i-1$, in which the vertices in $V_i$ form a clique while all other vertices are isolated.
	
	Suppose that $\vc$ is an activity timeline for $\G$ with $|\{(v,a,b) \in \vc \mid v \in V_i\}| \leq k_i$ for all~$i \in [r]$.
	For each $v\in V$, let $\overline{k}_v \coloneqq |\{(v,a,b)\in \vc  \mid a\in[\tau]\}|$ be the number of times $v$ is used in $\vc$, that is, $\sum_{v\in V_i} \overline{k}_v \leq k_i$ for each~$i\in[r]$.
	Hence, there exists a function~$f_i \colon [k_i] \rightarrow V_i$ such that $|f_i^{-1}(v)| \geq \overline{k}_v$ for all $v\in V_i$.
        Using~$f_i$, we can cover the clique on~$V_i$ in the layers~$t_i,\ldots,t_i +2k_i-1$ that represent the color~$i$ with the intervals
        \[\vc'_i \coloneqq \{(v,t_i+2a-2,t_i + 2a-1) \mid v \in V_i, a \in [k_i], f_i(a) \neq v\}.\]
	Then, $\vc' \coloneqq \vc \cup (\bigcup_{i\in [r]} \vc'_i)$ is an activity timeline for~$\G'$.
	Moreover, for every $v\in V_i$, the following holds:
	\begin{align*}
          &|\{(v,a,b) \in \vc' \mid a \in[\tau']\}|=\\
          &|\{(v,a,b) \in \vc \mid a \in [\tau]\}| +
	|\{(v,a,a+1) \in \vc_i' \mid  \tau+1 \leq a \leq \tau' -1\}| \\
	&= \overline{k}_v + k_i - |f_i^{-1}(v)| \le k_i=k'_v.
	\end{align*}
	
	Conversely, suppose that $\vc'$ is an activity timeline for $\G'$.
	In every layer $t_i,\ldots,t_i+2k_i-1$ at least $|V_i|-1$ vertices in $V_i$ must be active since $V_i$ forms a clique in these layers.
	Since every interval can cover only two layers, this requires at least $(|V_i| -1)k_i$ intervals that use vertices in $V_i$.
	Hence, for every $i \in [r]$, $\vc'$ can contain at most $k_i$ intervals $(v,a,b)$ with~$a\in [\tau]$ and $v\in V_i$.
	Therefore, $\vc \coloneqq \{(v,a,b) \in \vc' \mid 1 \leq a \leq \tau \}$ is an activity timeline that covers $\G$ with the required property.
\end{proof}

To prove \Cref{thm:w1-max-n}, we now show how to reduce \textsc{Nonuniform \MTmax} to \MTmax.

\begin{proof}[Proof of \Cref{thm:w1-max-n}]
	We reduce from \textsc{Nonuniform \MTmax}.
	Given an input instance $(\G=(V=\{v_1,\ldots,v_n\},(E_t)_{t \in [\tau]}),\ell = 1, (k_v)_{v\in V})$,
	we construct an instance $(\G'=(V',(E'_t)_{t \in [\tau']}),\ell=1,k)$ of \MTmax.
	We let $V'\coloneqq V \cup \{u_1,u_2\}$, $\tau' \coloneqq \tau  + 2k(|V|+2)$, and $k \coloneqq \max_{v\in V} k_v$.
	The layers of $\G'$ are as follows:
	For $t \in [\tau]$, we let~$E'_t \coloneqq E_t$.
	The layers $E'_{\tau+1},\ldots,E'_{\tau+4k}$ only contain the edge $\{u_1,u_2\}$.
	Then, for~$i\in[n]$, the layers $E'_{\tau+2k(i+1) + 1},\ldots,E'_{\tau+2k(i+1) + 2(k - k_{v_i})}$ contain only the edge $\{v_i,u_1\}$, while the layers~$E'_{\tau+2k(i+1) + 2(k - k_{v_i}) +1},\ldots,E'_{\tau+2k(i+2)}$ are empty.
	
	Suppose that $\vc$ is an activity timeline that covers $\G$ and contains only~$k_v$ intervals that use $v$ for each $v\in V$.
	Then, we construct a $k$-activity timeline~$\vc'$ that covers $\G'$ as follows.
	We include all intervals in $\vc$.
	We add $(u_1,\tau+2a-1,\tau+2a)$ and $(u_2,\tau+2k+2a-1,\tau+2k+2a)$ for all $a\in [k]$ and $(v_i,\tau+2k(i+1)+a-1,\tau+2k(i+1)+a)$ for all $a \in [k-k_v]$.
	
	Now suppose that $\vc'$ is a $k$-activity timeline that covers $\G'$.
	First, $\vc'$ must contain~$k$ intervals that use $u_1$ and $k$ intervals that use $u_2$ in order to cover the appearances of the edge $\{u_1,u_2\}$ in $E'_{\tau+1},\ldots,E'_{\tau+4k}$.
	Hence, the edges $\{u_1,v_i\}$ can only be covered by intervals that use $v_i$.
	This requires~$k - k_{v_i}$ intervals that use $v_i$.
	Hence, $\vc \coloneqq \{(v,a,b) \in \vc' \mid a\in[\tau], v\in V\}$ is an activity timeline that covers $\G$ and only contains $k_v$ vertices that use each~$v\in V$.
\end{proof}

Unless the ETH fails, \textsc{Unary Bin Packing} cannot be solved in time $f(\beta)|I|^{o(\beta/\log\beta)}$ for any function~$f$, where~$|I|$ is the input size~\cite[Theorem~3]{JKMS13}.
The fact that our reduction yields a temporal graph with~$O(\beta)$ vertices implies the following:

\begin{corollary}\label{cor:ETHbin}
	\MTmax cannot be solved in time $f(n)|\G|^{o(n/\log n)}$, for any function~$f$, even if $\ell = 1$, unless the ETH fails. 
\end{corollary}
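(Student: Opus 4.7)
The plan is to chain together the three reductions underlying \Cref{thm:w1-max-n} and observe that each step blows up the vertex count $n$ only by a constant factor and the overall instance size only polynomially. Concretely, starting from a \textsc{Unary Bin Packing} instance $I$ with $\beta$ bins, the reduction in \Cref{lem:w1-mcmax-n} produces a \textsc{Multicolored \MTmax} instance on $2\beta$ vertices; the reduction in \Cref{lem:w1-nonuniform} preserves the vertex set; and the reduction in the proof of \Cref{thm:w1-max-n} itself adds only two further vertices ($u_1$ and $u_2$). Composing the three reductions therefore yields an \MTmax instance with $\ell=1$ and $n = 2\beta + 2 \in O(\beta)$, computable in time polynomial in $|I|$.

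With this composition in hand, I would argue by contraposition. Suppose that \MTmax with $\ell=1$ could be solved in $f(n)\,|\G|^{o(n/\log n)}$ time for some computable $f$. Applying this hypothetical algorithm to the output of the composed reduction solves the original \textsc{Unary Bin Packing} instance in time $f(O(\beta))\,|I|^{O(1) \cdot o(n/\log n)}$. Since $n = \Theta(\beta)$, monotonicity of $x/\log x$ gives $n/\log n = \Theta(\beta/\log\beta)$, so after absorbing the polynomial size blow-up into the exponent the running time is bounded by $g(\beta)\,|I|^{o(\beta/\log\beta)}$ for some computable $g$. This contradicts the ETH-based lower bound of \cite[Theorem~3]{JKMS13} for \textsc{Unary Bin Packing}.

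The only bookkeeping point that needs to be checked carefully is that $|\G|$ really is polynomial in $|I|$ along the whole composition. Because the item sizes and bin capacity are encoded in unary, $S = \sum_i s_i \leq |I|$ and the first reduction has lifetime $\tau = 2S$; the subsequent two reductions extend the lifetime only additively by quantities bounded by $\sum_i k_i$ and $k(|V|+2)$ respectively, each of which is polynomial in $|I|$. The vertex set grows by at most a constant factor at every step, as already noted. This is the main (and essentially only) technical content of the proof; the asymptotic manipulation with the little-$o$ in the second paragraph is then routine.
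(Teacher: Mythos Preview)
Your proposal is correct and follows exactly the approach the paper intends: the paper's ``proof'' is really just the one-line remark preceding the corollary, noting that the composed reduction from \textsc{Unary Bin Packing} produces an \MTmax instance with $O(\beta)$ vertices, so the ETH lower bound from \cite[Theorem~3]{JKMS13} transfers. You have simply spelled out the bookkeeping (vertex counts through the three reductions and the polynomial bound on $|\G|$) that the paper leaves implicit.
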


\section{Conclusion}
We completely settled the computational complexity of \MTmax regarding the considered parameters.
For \MTsum, the open question remaining is whether it is in \FPT when parameterized by the number of vertices.
Besides this question, there are many others which could be studied in future work:
\begin{compactitem}
  \item Are there (polynomial) kernelizations for the \FPT cases? Developing data reduction rules might be especially interesting from a practical perspective.
  \item Which other parameters yield tractable special cases? For example, can the number of vertices be replaced by a smaller parameter (e.g. the vertex cover number or the treewidth of the underlying graph)?
  \item What changes if one only bounds the overall number of activity intervals instead of bounding for each vertex? It should be possible to modify all our algorithms to solve this problem.
    Maybe some of our hard cases become tractable?
  \item What about ``temporalizing'' other vertex selection problems (like \textsc{Dominating Set}) in an analogous way (that is, at each time step the set of active vertices must be a valid selection for the current graph)? In fact, our dynamic programs and the ILP should easily work here as well, since they do not specifically depend on vertex covers.
\end{compactitem}

\bibliographystyle{plainnat}
\bibliography{string-short,ref}

\begin{thebibliography}{29}
\providecommand{\natexlab}[1]{#1}
\providecommand{\url}[1]{\texttt{#1}}
\expandafter\ifx\csname urlstyle\endcsname\relax
  \providecommand{\doi}[1]{doi: #1}\else
  \providecommand{\doi}{doi: \begingroup \urlstyle{rm}\Url}\fi

\bibitem[Akrida et~al.(2020)Akrida, Mertzios, Spirakis, and Zamaraev]{AMSZ20}
Eleni~C. Akrida, George~B. Mertzios, Paul~G. Spirakis, and Viktor Zamaraev.
\newblock Temporal vertex cover with a sliding time window.
\newblock \emph{J.\ Comput.\ Syst.\ Sci.}, 107:\penalty0 108--123, 2020.

\bibitem[Boehmer et~al.(2021)Boehmer, Froese, Henkel, Lasars, Niedermeier, and
  Renken]{BoehmerFHLNR21}
Niclas Boehmer, Vincent Froese, Julia Henkel, Yvonne Lasars, Rolf Niedermeier,
  and Malte Renken.
\newblock Two influence maximization games on graphs made temporal.
\newblock In \emph{Proc.\ of the 30th IJCAI}, pages 45--51. ijcai.org, 2021.

\bibitem[Bonamy et~al.(2019)Bonamy, Kowalik, Pilipczuk, Soca\l{}a, and
  Wrochna]{BKPSW19}
Marthe Bonamy, \L{}ukasz Kowalik, Micha\l{} Pilipczuk, Arkadiusz Soca\l{}a, and
  Marcin Wrochna.
\newblock Tight lower bounds for the complexity of multicoloring.
\newblock \emph{ACM Trans.\ Comput.\ Theory}, 11\penalty0 (3):\penalty0 1--19,
  2019.

\bibitem[Bumpus and Meeks(2021)]{iwocaBumpusM21}
Benjamin~Merlin Bumpus and Kitty Meeks.
\newblock Edge exploration of temporal graphs.
\newblock In \emph{Proc.\ of the 32nd IWOCA}, pages 107--121. Springer, 2021.

\bibitem[Casteigts et~al.(2021)Casteigts, Himmel, Molter, and
  Zschoche]{algorithmicaCasteigtsHMZ21}
Arnaud Casteigts, Anne{-}Sophie Himmel, Hendrik Molter, and Philipp Zschoche.
\newblock Finding temporal paths under waiting time constraints.
\newblock \emph{Algorithmica}, 83\penalty0 (9):\penalty0 2754--2802, 2021.

\bibitem[Deligkas and Potapov(2020)]{DeligkasP20}
Argyrios Deligkas and Igor Potapov.
\newblock Optimizing reachability sets in temporal graphs by delaying.
\newblock In \emph{Proc.\ of the 34th AAAI}, pages 9810--9817. {AAAI} Press,
  2020.

\bibitem[Diestel(2016)]{Die16}
Reinhard Diestel.
\newblock \emph{Graph Theory}.
\newblock Springer, 5th edition, 2016.

\bibitem[Downey and Fellows(2013)]{DF13}
Rodney~G. Downey and Michael~R. Fellows.
\newblock \emph{Fundamentals of Parameterized Complexity}.
\newblock Texts in Computer Science. Springer, 2013.

\bibitem[Erlebach et~al.(2021)Erlebach, Hoffmann, and Kammer]{jcssErlebach0K21}
Thomas Erlebach, Michael Hoffmann, and Frank Kammer.
\newblock On temporal graph exploration.
\newblock \emph{J.\ Comput.\ Syst.\ Sci.}, 119:\penalty0 1--18, 2021.

\bibitem[Fan et~al.(2021)Fan, Ju, Hou, Ye, Wan, Wang, Mei, and
  Xiong]{FanJHYWWMX21}
Yujie Fan, Mingxuan Ju, Shifu Hou, Yanfang Ye, Wenqiang Wan, Kui Wang, Yinming
  Mei, and Qi~Xiong.
\newblock Heterogeneous temporal graph transformer: An intelligent system for
  evolving android malware detection.
\newblock In \emph{Proc.\ of the 27th KDD}, pages 2831--2839. {ACM}, 2021.

\bibitem[Fellows et~al.(2018)Fellows, Jaffke, Kir{\'{a}}ly, Rosamond, and
  Weller]{FellowsJKRW18}
Michael~R. Fellows, Lars Jaffke, Aliz~Izabella Kir{\'{a}}ly, Frances~A.
  Rosamond, and Mathias Weller.
\newblock What is known about vertex cover kernelization?
\newblock In \emph{Adventures Between Lower Bounds and Higher Altitudes}, pages
  330--356. Springer, 2018.

\bibitem[Fluschnik et~al.(2019)Fluschnik, Niedermeier, Rohm, and
  Zschoche]{FNRZ19}
Till Fluschnik, Rolf Niedermeier, Valentin Rohm, and Philipp Zschoche.
\newblock Multistage vertex cover.
\newblock In \emph{Proc.\ of the 14th IPEC}, pages 14:1--14:14, 2019.

\bibitem[Heeger et~al.(2021)Heeger, Himmel, Kammer, Niedermeier, Renken, and
  Sajenko]{tcsHeegerHKNRS21}
Klaus Heeger, Anne{-}Sophie Himmel, Frank Kammer, Rolf Niedermeier, Malte
  Renken, and Andrej Sajenko.
\newblock Multistage graph problems on a global budget.
\newblock \emph{Theor.\ Comput.\ Sci.}, 868:\penalty0 46--64, 2021.

\bibitem[Holme and Saramäki(2012)]{HS12}
Petter Holme and Jari Saramäki.
\newblock Temporal networks.
\newblock \emph{Phys.\ Rep.}, 519\penalty0 (3):\penalty0 97--125, 2012.

\bibitem[Impagliazzo and Paturi(2001)]{IP01}
Russell Impagliazzo and Ramamohan Paturi.
\newblock On the complexity of {$k$-SAT}.
\newblock \emph{J.\ Comput.\ Syst.\ Sci.}, 62\penalty0 (2):\penalty0 367--375,
  2001.

\bibitem[Jansen et~al.(2013)Jansen, Kratsch, Marx, and Schlotter]{JKMS13}
Klaus Jansen, Stefan Kratsch, Dániel Marx, and Ildikó Schlotter.
\newblock Bin packing with fixed number of bins revisited.
\newblock \emph{J.\ Comput.\ Syst.\ Sci.}, 79\penalty0 (1):\penalty0 39--49,
  2013.

\bibitem[Klobas et~al.(2021)Klobas, Mertzios, Molter, Niedermeier, and
  Zschoche]{ijcaiKlobasMMNZ21}
Nina Klobas, George~B. Mertzios, Hendrik Molter, Rolf Niedermeier, and Philipp
  Zschoche.
\newblock Interference-free walks in time: Temporally disjoint paths.
\newblock In \emph{Proc.\ of the 30th IJCAI}, pages 4090--4096. ijcai.org,
  2021.

\bibitem[Lenstra(1983)]{L83}
Hendrik~W. Lenstra.
\newblock Integer programming with a fixed number of variables.
\newblock \emph{Math.\ Oper.\ Res.}, 8\penalty0 (4):\penalty0 538--548, 1983.

\bibitem[Lewis and Yannakakis(1980)]{LY80}
John~M. Lewis and Mihalis Yannakakis.
\newblock The node-deletion problem for hereditary properties is {NP}-complete.
\newblock \emph{J.\ Comput.\ Syst.\ Sci.}, 20\penalty0 (2):\penalty0 219--230,
  1980.

\bibitem[Mertzios et~al.(2020)Mertzios, Molter, Niedermeier, Zamaraev, and
  Zschoche]{stacsMertziosMNZZ20}
George~B. Mertzios, Hendrik Molter, Rolf Niedermeier, Viktor Zamaraev, and
  Philipp Zschoche.
\newblock Computing maximum matchings in temporal graphs.
\newblock In \emph{Proc.\ of the 37th STACS}, pages 27:1--27:14, 2020.

\bibitem[Mertzios et~al.(2021)Mertzios, Molter, and Zamaraev]{jcssMertziosMZ21}
George~B. Mertzios, Hendrik Molter, and Viktor Zamaraev.
\newblock Sliding window temporal graph coloring.
\newblock \emph{J.\ Comput.\ Syst.\ Sci.}, 120:\penalty0 97--115, 2021.

\bibitem[Michail and Spirakis(2016)]{tcsMichailS16}
Othon Michail and Paul~G. Spirakis.
\newblock Traveling salesman problems in temporal graphs.
\newblock \emph{Theor.\ Comput.\ Sci.}, 634:\penalty0 1--23, 2016.

\bibitem[Nederlof(2008)]{Nederlof08}
Jesper Nederlof.
\newblock Inclusion exclusion for hard problems, 2008.
\newblock Master thesis. Department of Information and Computer Science,
  Utrecht University.

\bibitem[Razgon and O'Sullivan(2009)]{RO09}
Igor Razgon and Barry O'Sullivan.
\newblock Almost {2-SAT} is fixed-parameter tractable.
\newblock \emph{J.\ Comput.\ Syst.\ Sci.}, 75\penalty0 (8):\penalty0 435--450,
  2009.

\bibitem[Rozenshtein and Gionis(2019)]{RG19}
Polina Rozenshtein and Aristides Gionis.
\newblock Mining temporal networks.
\newblock In \emph{Proc.\ of the 25th KDD}, pages 3225--3226. ACM, 2019.

\bibitem[Rozenshtein et~al.(2017)Rozenshtein, Tatti, and Gionis]{RTG17}
Polina Rozenshtein, Nikolaj Tatti, and Aristides Gionis.
\newblock The network-untangling problem: From interactions to activity
  timelines.
\newblock In \emph{Proc.\ of the ECML/PKDD~'17}, pages 701--716. Springer,
  2017.

\bibitem[Rozenshtein et~al.(2021)Rozenshtein, Tatti, and Gionis]{RTG21}
Polina Rozenshtein, Nikolaj Tatti, and Aristides Gionis.
\newblock The network-untangling problem: from interactions to activity
  timelines.
\newblock \emph{Data Min.\ Knowl.\ Discov.}, 35\penalty0 (1):\penalty0
  213--247, 2021.

\bibitem[Singer et~al.(2019)Singer, Guy, and Radinsky]{SingerGR19}
Uriel Singer, Ido Guy, and Kira Radinsky.
\newblock Node embedding over temporal graphs.
\newblock In \emph{Proc.\ of the 28th IJCAI}, pages 4605--4612. ijcai.org,
  2019.

\bibitem[Zschoche et~al.(2020)Zschoche, Fluschnik, Molter, and
  Niedermeier]{jcssZschocheFMN20}
Philipp Zschoche, Till Fluschnik, Hendrik Molter, and Rolf Niedermeier.
\newblock The complexity of finding small separators in temporal graphs.
\newblock \emph{J.\ Comput.\ Syst.\ Sci.}, 107:\penalty0 72--92, 2020.

\end{thebibliography}

\end{document}